\documentclass[journal]{IEEEtran}
\usepackage{amsmath,amssymb,amsfonts,amsthm}
\usepackage{cleveref}
\usepackage{mathrsfs}
\usepackage{xcolor}

\usepackage{cite}
\newcommand{\tr}{{{\mathsf T}}}
\newcommand{\Lin}[1]{\textcolor{blue}{[#1]}}

\usepackage{booktabs}
\usepackage{caption}
\usepackage{graphicx}  
\usepackage{caption}
\usepackage{subfigure}
\usepackage{url}

\newtheorem{assumption}{Assumption}
\newtheorem{lemma}{Lemma}
\newtheorem{remark}{Remark}
\newtheorem{theorem}{Theorem}
\newtheorem{problem}{Problem}

\newtheorem{proposition}{Proposition}

\begin{document}

\title{On the Optimization Landscape of Observer-based Dynamic Linear Quadratic Control}

\author{Jingliang Duan, Jie Li, Yinsong Ma, Liye Tang, Guofa Li, Liping Zhang, Shengbo Eben Li, Lin Zhao
\thanks{\textit{(Corresponding author: Jie Li.)}}
\thanks{Jingliang Duan is with the School of Mechanical Engineering, University of Science and Technology Beijing, Beijing 100083, China. 
{\tt\small Email: duanjl@ustb.edu.cn}.}
\thanks{Jie Li and Guofa Li are with the College of Mechanical and Vehicle Engineering, Chongqing University, Chongqing 400044, China.
{\tt\small Email: jieli@cqu.edu.cn; liguofa@cqu.edu.cn}.}
\thanks{Yinsong Ma is with the Laboratory for Computational Sensing and Robotics, Johns Hopkins University, Baltimore, MD 21218, USA.
{\tt\small Email: yma71@jhu.edu}.}
\thanks{Liye Tang and Shengbo Eben Li are with the School of Vehicle and Mobility, Tsinghua University, Beijing 100084, China. 
{\tt\small Email: tly20@tsinghua.org.cn; lishbo@tsinghua.edu.cn}.}
\thanks{Liping Zhang is with the Department of Mathematical Sciences, Tsinghua University, Beijing 100084, China. 
{\tt\small Email: lipingzhang@tsinghua.edu.cn}.}
\thanks{Lin Zhao is with the Department of Electrical and Computer Engineering, National University of Singapore, Singapore 119077.
{\tt\small Email: elezhli@nus.edu.sg}.}
}

\maketitle

\begin{abstract}
Understanding the optimization landscape of linear quadratic regulation (LQR) problems is fundamental to the design of efficient reinforcement learning solutions. Recent work has made significant progress in characterizing the landscape of static output-feedback control and linear quadratic Gaussian (LQG) control. For LQG, much of the analysis leverages the separation principle, which allows the controller and estimator to be designed independently. However, this simplification breaks down when the gradients with respect to the estimator and controller parameters are inherently coupled, leading to a more intricate analysis. This paper investigates the optimization landscape of observer-based dynamic output-feedback control of LQR problems. We derive the optimal observer-controller pair in settings where transient quadratic performance cannot be neglected. Our analysis reveals that, in general, the combination of the standard LQR controller and the observer that minimizes the trace of the accumulated estimation error covariance does not correspond to a stationary point of the overall closed-loop performance objective. Moreover, we derive a pair of discrete-time Sylvester equations with symmetric structure, both involving the same set of matrix elements, that characterize the stationary point of the observer-based dynamic LQR problem. These equations offer analytical insight into the structure of the optimality conditions and provide a foundation for developing numerical policy gradient methods aimed at learning complex controllers that rely on reconstructed state information.
\end{abstract}

\begin{IEEEkeywords}
Reinforcement learning, optimization landscape, state observer, dynamic output feedback. 
\end{IEEEkeywords}

\IEEEpeerreviewmaketitle

\section{Introduction}
\IEEEPARstart{T}{he} linear quadratic regulation (LQR) problem serves as a fundamental research topic within the realm of control theory~\cite{lewis2012optimal}, whose optimal controller has a static state-feedback form. During the past few years, theoretical analysis work targeting the LQR control has made substantial progress, laying the foundation for policy structures~\cite{tang2023analysis}, convergence guarantees~\cite{fazel2018global,malik2020derivative}, efficiency improvements~\cite{mohammadi2021convergence,zhao2025asynchronous}, and sample complexity~\cite{chen2024global}. However, due to the practical challenges of obtaining full state information, output-feedback control warrants further investigation as a more realistic and significant controller design approach.

Some recent works have studied the optimization landscape of static output-feedback control~\cite{duan2023optimization}. The necessary and sufficient conditions for the stability of static output-feedback (SOF) controllers are derived by bridging the error between state-feedback gain and output-feedback gain~\cite{ilka2022novel}. On the other hand, further research has delved into the optimization landscape of dynamic output-feedback linear quadratic regulation (dLQR)~\cite{sadamoto2024policy}.~\cite{duan2024optimization} proved the observable stationary point of dLQR is unique. \cite{zhao2023globally} proposed an alternative policy parameterization using the past input-output trajectory of finite length as feedback and established a global convergence guarantee. 
Robustness constraints have been incorporated in the design of dynamical controllers, where the feasible set of stabilizing controllers is proved to have at most two connected components~\cite{hu2022connectivity}. The global optimality of the standard $H_{\infty}$ robust control with non-degenerated stabilizing dynamical structures is further considered in~\cite{tang2023global}, which reveals that all Clarke stationary points are globally optimal despite the non-smoothness of $H_{\infty}$ cost function. 
For stochastic dynamics, the linear quadratic Gaussian (LQG) problem is analyzed from the perspectives of the connectivity of stabilizing controllers and the structure of stationary points~\cite{tang2023analysis}. For unknown systems, model-free learning methods have been designed for dLQR with convergence and optimality analysis~\cite{xie2025optimal}. 

Although static output-feedback LQR offers greater design flexibility, observer-based output-feedback can achieve performance \textit{closer} to that of full state-feedback control by leveraging reconstructed state information. This advantage stems from the separation principle, which enables independent pole placement for state observers and feedback controllers~\cite{lewis2012optimal}. Given that pole locations dominate the convergence rate of estimation error and the performance deviation from the state-feedback control~\cite{xie2025optimal}, such structural decoupling has important theoretical significance. 
However, for policy gradient method of solving LQG, it has been demonstrated in LQG problems that the gradients of the objective function with respect to the controller and observer are not separable~\cite{mohammadi2021lack}. 
Based on the derived gradients, policy gradient methods demonstrate efficacy in optimizing the controller-observer pair for \textbf{observer-based dynamic LQR} (\texttt{OD-LQR}) problems through dynamical system augmentation~\cite{kumar2024optimal}. However, the analysis there overlooks the transient performance and hence cannot guarantee the optimality. Moreover, existing works only focus on developing numerical solutions. They lack rigorous analysis of both the optimality conditions and the effect of the transient quadratic performance on the optimization landscape. 

In this work, we investigate the stationary point of dynamic output-feedback LQR control with a state observer, providing novel insights into the optimality conditions of observer-based LQR. Our main theoretical results are summarized as follows:
\begin{enumerate}
\item We derive an analytical expression for the standard observer gain $L^{\star}$, which minimizes the accumulated estimation variance. This gain serves as the optimal observer gain for \texttt{OD-LQR}, analogous to the optimal filter gain in the LQG setting. Unlike observer gains determined via pole assignment, which lacks transient performance guarantee, our theoretical analysis and numerical results demonstrate that $L^{\star}$ ensures control optimality under the standard state-feedback LQR gain $K^{\star}$ \Lin{\Cref{theorem.stationary_point}}.
\item Using policy gradient expressions, we analyze the optimization landscape of \texttt{OD-LQR}. Our results demonstrate that while the standard LQR gain $K^{\star}$ is commonly used in observer-based designs, it is generally only suboptimal in achieving the minimal quadratic performance when combined with the standard observer gain $L^{\star}$ in the \texttt{OD-LQR} problem. It becomes optimal when the initial cross-correlation between observation error and state observation vanishes \Lin{\Cref{proposition.stationary_failed}}.
\item We derive the set of stationary points of the \texttt{OD-LQR} problem by solving the first-order optimality conditions with the coupled gradients. We show that the solution set satisfies a pair of discrete-time Sylvester equations with symmetric structure. Interestingly, it is shown that when the aforementioned cross-correlation vanishes, the set collapses into the standard controller-observer pair $(L^{\star}, K^{\star})$ (see \Lin{\Cref{theorem.sufficient_condition}} and \Lin{\Cref{proposition.stationary_degeneration}}).
\end{enumerate}

To the best of our knowledge, this work represents the first systematic investigation into the optimality of observer-based dynamic LQR. Our findings offer new theoretical insights for designing policy gradient methods for \texttt{OD-LQR} problems, where the stationary point does not necessarily adhere to the separation principle when optimizing the objective function.

The rest of the work is organized as follows. Section~\ref{sec.problem} presents the formulation of \texttt{OD-LQR} problems. Section~\ref{sec.gradient} presents our main results on policy gradients and stationary points. Numerical experiments are shown in Section~\ref{sec.experiment} and Section~\ref{sec.conclusion} concludes this work. 

\textbf{Notations.} For $X \in \mathbb{R}^{n \times n}$, we use $\rho(X)$ to denote its spectral radius. The notations $\mathbb{S}^n_+$ and $X \succeq 0$ (respectively, $\mathbb{S}^n_{++}$ and $X \succ 0$) denotes the set of symmetric $n \times n$ positive semi-definite (respectively, positive definite) matrices. We employ the superscript $\star$ to denote the standard optimal controller $K^\star$ in LQR and the standard observer $L^\star$ that minimizes the trace of the accumulated state estimation variance, and use $\ddagger$ to denote the stationary points $(K^\ddagger, L^\ddagger)$ of \texttt{OD-LQR} problems. 

\section{Problem Statement}
\label{sec.problem}
\subsection{Linear Quadratic Control}
Consider a discrete-time linear time-invariant (LTI) system 
\begin{equation}  
\label{eq.state_function}
\begin{aligned}
x_{t+1} &= Ax_t+Bu_t,\\
y_t &= Cx_t,
\end{aligned}
\end{equation}
where $x_t \in \mathbb{R}^n$, $u_t \in \mathbb{R}^m$, and $y_t \in \mathbb{R}^d$ denote the states, control inputs, and observation outputs separately, and $A \in \mathbb{R}^{n\times n}$, $B \in \mathbb{R}^{n\times m}$, $C \in \mathbb{R}^{d\times n}$ are known dynamics. The linear quadratic control aims to find control input $u_t$ to minimize the cumulative quadratic utilities: 
\begin{equation}
\label{eq.objective}
\min_{u_t} \ \mathbb{E}_{x_0} \left[\sum_{t=0}^{\infty}\left(x_t^{\tr} Q x_t + u_t^{\tr} R u_t\right) \right].
\end{equation}
The initial state distribution for $x_0$ is assumed to satisfy that $\mathbb{E}_{x_0}[x_0 x_0^{\tr}] \succ 0$, a covariance condition commonly adopted in data-driven control~\cite{fazel2018global,  lee2018primal,malik2020derivative,hu2022towards} that parallels the persistent excitation requirement~\cite{de2019formulas}. For the above system~\eqref{eq.state_function} and problem~\eqref{eq.objective}, we make the following assumptions: 
\begin{assumption}
\label{assumption.control_observe}
$Q \in \mathbb{S}_+^{n}$, $R \in \mathbb{S}_{++}^{m}$, $(A,B)$ is controllable, and $(C, A)$ and $(Q^{\frac{1}{2}},A)$ are observable. We maintain generality by considering $C$ to have rows of full rank. 
\end{assumption}

When $C$ coincides with the identity matrix $I_n$, the optimal controller takes the static form $u_t = K x_t$, where $K$ can be derived from the associated Riccati equation. 
Under \Cref{assumption.control_observe}, observer-based dynamic controllers with stabilization guarantees enjoy computational tractability by the separation principle~\cite{lewis2012optimal}, which will be presented in  Section~\ref{subsec.observer_design}. 

\begin{remark}
While a continuous-time variant could be conceptualized, extending it to policy optimization would exacerbate the curse of dimensionality due to function approximation in continuous state-spaces. In contrast, our discrete-time formulation provides exact analytical gradients to circumvent this issue. Furthermore, the mathematical structure changes fundamentally. The continuous-time optimality conditions would manifest as coupled algebraic Riccati equations rather than the symmetric Sylvester equations derived in this work. 
\end{remark}

\subsection{Observer Design}
\label{subsec.observer_design}
Consider the standard observer-based dynamic controller
\begin{equation}
\label{eq:observer-based-controller}
\begin{aligned}
    \xi_{t+1} &= \mathcal{A}_{K,L} \xi_t + Ly_{t}, \\
    u_t &= -K \xi_{t},
\end{aligned}
\end{equation}
with $\mathcal{A}_{K,L} := A - BK - LC$, where $\xi_t \in \mathbb{R}^n$ is the state observation and also the internal state of dynamic controller, $K \in \mathbb{R}^{m \times n}$ is the undetermined controller gain, and $L \in \mathbb{R}^{n \times d}$ is the observer gain to be solved. 

According to the separation principle, the controller~\eqref{eq:observer-based-controller} is stabilizing if and only if $K$ and $L$ are stabilizing‌ gains~\cite{zhang2020robust}.
We define the stabilizing set of $K$ and $L$ as
\begin{equation}
\label{eq:stabilizing-K}
\begin{aligned}
\mathbb{K}:&= \{K\in \mathbb{R}^{m\times n}:\rho(A-BK)<1\},\\
\mathbb{L}:&= \{L\in \mathbb{R}^{n\times d}:\rho(A-LC)<1\}.
\end{aligned}    
\end{equation}
A widely-used selection is to choose $K$ as the standard state-feedback LQR gain $K^{\star}$ and find a stabilizing observer gain using pole assignment method such that $A-LC$ converges faster than $A-BK$. Although we can obtain a stabilizing observer gain, its optimality with respect to the accumulated quadratic utilities~\eqref{eq.objective} has not been fully considered. Section~\ref{subsec.cost_function} will analyze the quadratic cost from the perspective of the Lyapunov equation of the augmented system. 

\subsection{Cost Function in the \texttt{OD-LQR} Problem}
\label{subsec.cost_function}
The closed-loop dynamics of the LTI system under the dynamic controller is
\begin{equation}   
\label{eq.closed-loop-system}
\begin{bmatrix}
     x_{t+1}\\
     \xi_{t+1} 
\end{bmatrix}= \begin{bmatrix}
     A & -BK\\
     LC & A-BK-LC
\end{bmatrix}\begin{bmatrix}
     x_{t}\\
     \xi_{t} 
\end{bmatrix}.
\end{equation}
For the convenience of subsequent analysis, a linear transformation $T$ is performed on the augmented state $\begin{bmatrix}x^{\tr}_{t}&\xi^{\tr}_{t}\end{bmatrix}^{\tr}$, resulting in the transformed augmented system state as follows 
\begin{equation}
\nonumber
\bar{z}_t: = \begin{bmatrix}
x_t \\
x_t - \xi_t
\end{bmatrix} = \begin{bmatrix}
     I_n & 0\\
     I_n & -I_n
\end{bmatrix}\begin{bmatrix}
x_t \\
\xi_t
\end{bmatrix} = T \begin{bmatrix}
x_t \\
\xi_t
\end{bmatrix}, 
\end{equation}
whose initial distribution is denoted as $\mathcal{B}$. Therefore, the dynamics of the augmented system is expressed as 
\begin{equation}   
\nonumber
\begin{bmatrix}
     x_{t+1}\\
     x_{t+1} - \xi_{t+1} 
\end{bmatrix}= \begin{bmatrix}
     A - BK & BK\\
     0 & A - LC
\end{bmatrix}\begin{bmatrix}
     x_{t}\\
     x_{t} - \xi_{t} 
\end{bmatrix},
\end{equation}
where $x_{t} - \xi_{t}$ is the observation error. 
We further denote $$\bar{A}:=\begin{bmatrix}
     A & 0\\
     0 & A
\end{bmatrix}, \;\hat{B}:=\begin{bmatrix}
     B\\
     0
\end{bmatrix},\; \hat{C}:=\begin{bmatrix}0&-C\end{bmatrix},$$ then the close-loop system~\eqref{eq.closed-loop-system} is denoted as 
\begin{equation}
\label{eq.closed-loop-system_2}
\bar{z}_{t+1} = \hat{\mathcal{A}}_{K,L}\bar{z}_{t},
\end{equation}
where $\hat{\mathcal{A}}_{K,L}:=\bar{A} - \hat{B} K \bar{F} + \hat{F}^{\tr} L \hat{C}$, $\bar{F}:=\begin{bmatrix}I_n&-I_n\end{bmatrix}$ and $\hat{F}:=\begin{bmatrix}0&I_n\end{bmatrix}$.

To link our formulation with the general framework of policy optimization, we first define the value function at time~$t$ for the augmented state $\bar{z}_t$ under fixed gains $(K, L)$ as: 
\begin{equation}
\nonumber
V_t(\bar{z}_t) = \bar{z}_t^{\tr} S_t \bar{z}_t,
\end{equation}
where $S_t$ is time-dependent. According to the principle of dynamic programming, the Bellman equation is 
\begin{equation}
\nonumber
S_t = \hat{Q} + \hat{\mathcal{A}}_{K,L}^{\tr} S_{t+1} \hat{\mathcal{A}}_{K,L}.
\end{equation}
As the system evolves and approaches the steady state under stabilizing gains $(K, L)$, the value function becomes time-invariant. This implies the convergence of the value matrix: 
\begin{equation}
\nonumber
\lim_{t \to \infty} S_{t+1} = S_t = S_{K,L}.
\end{equation}
Therefore, the quadratic cost functional is formulated as
\begin{equation}
\nonumber
V_{K,L}(\bar{z}_t): = \bar{z}_t^{\tr} S_{K,L}\bar{z}_t,
\end{equation}
where $S_{K,L} \in \mathbb{S}_{+}^{2n}$ satisfies a Lyapunov equation~\eqref{eq.lyapunov_equation}, as shown in \Cref{lemma.OD-LQR_cost}. Let the cumulative state correlation driven by a stabilizing gain $K \in \mathbb{K}$ be formally defined as
\begin{equation}
\nonumber
\Omega_{K,L}:=\mathbb{E}_{\bar{z}_0\sim \mathcal{B}} \left[\sum_{t=0}^{\infty}\bar{z}_t \bar{z}_t^{\tr}\right].
\end{equation}
For each $K \in \mathbb{K}$ and $L \in \mathbb{L}$, there is an associated $S_{K,L}$ and $\Omega_{K,L}$. These matrices provide a convenient way to express the \texttt{OD-LQR} cost function, as summarized in \Cref{lemma.OD-LQR_cost}.

\begin{lemma}\cite[Lemma 1]{duan2024optimization}
\label{lemma.OD-LQR_cost}
For any stabilizing controller $K \in \mathbb{K}$ and observer $L \in \mathbb{L}$, the \texttt{OD-LQR} cost function is 
\begin{equation}   
\label{eq.cost_in_P}
J(K,L) = {\rm Tr}(S_{K,L}Y) = {\rm Tr}(\Omega_{K,L}\hat{Q}), 
\end{equation}
where $S_{K,L}$ and $\Omega_{K,L}$ constitute the unique positive semi-definite solutions to specified Lyapunov equations
\begin{subequations}
\begin{align} 
\label{eq.lyapunov_equation}
S_{K,L} &= \hat{Q} + \hat{\mathcal{A}}_{K,L}^{\tr} S_{K,L}\hat{\mathcal{A}}_{K,L},  \\
\Omega_{K,L} &= Y + \hat{\mathcal{A}}_{K,L}\Omega_{K,L}\hat{\mathcal{A}}_{K,L}^{\tr}, \label{eq.lyapunov_equation_sigma}
\end{align}
\end{subequations}
where $Y := \mathbb{E}_{\bar{z}_0 \sim \mathcal{B}} [\bar{z}_0\bar{z}_0^{\tr}]$ is the initial state correlation, and 
\begin{equation}
\nonumber
\hat{Q} = \begin{bmatrix}Q & 0\\0 & 0\end{bmatrix} + \bar{F}^{\tr} K^{\tr} R K \bar{F} = \begin{bmatrix}Q + K^{\tr} R K & - K^{\tr} R K\\- K^{\tr} R K & K^{\tr} R K\end{bmatrix}.
\end{equation}
\end{lemma}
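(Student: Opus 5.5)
We reduce the cost to a trace expression, using the closed-loop dynamics \eqref{eq.closed-loop-system_2} and the stability of $\hat{\mathcal{A}}_{K,L}$. For $K\in\mathbb{K}$ and $L\in\mathbb{L}$, the matrix $\hat{\mathcal{A}}_{K,L}$ equals the block upper-triangular matrix $\begin{bmatrix}A-BK & BK\\ 0 & A-LC\end{bmatrix}$. Its spectrum is therefore the union of the spectra of $A-BK$ and $A-LC$, so $\rho(\hat{\mathcal{A}}_{K,L})<1$.

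The first step is to rewrite the stage cost. Since $u_t=-K\xi_t=-K(x_t-(x_t-\xi_t))=-K\bar{F}\bar{z}_t$, we have
\begin{equation}
\nonumber
x_t^{\tr}Qx_t+u_t^{\tr}Ru_t=\bar{z}_t^{\tr}\hat{Q}\bar{z}_t .
\end{equation}
Here $\hat{Q}$ is exactly the matrix in the statement, and expanding $\bar{F}^{\tr}K^{\tr}RK\bar{F}$ gives the displayed block form. Because $\bar{z}_t=\hat{\mathcal{A}}_{K,L}^t\bar{z}_0$, the cost becomes
\begin{equation}
\nonumber
J(K,L)=\mathbb{E}\Big[\sum_{t\ge0}\bar{z}_0^{\tr}(\hat{\mathcal{A}}_{K,L}^{t})^{\tr}\hat{Q}\hat{\mathcal{A}}_{K,L}^{t}\bar{z}_0\Big].
\end{equation}

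The second step identifies $S_{K,L}$ and $\Omega_{K,L}$. Define $S_{K,L}:=\sum_{t\ge0}(\hat{\mathcal{A}}_{K,L}^{t})^{\tr}\hat{Q}\hat{\mathcal{A}}_{K,L}^{t}$ and $\Omega_{K,L}:=\sum_{t\ge0}\hat{\mathcal{A}}_{K,L}^{t}Y(\hat{\mathcal{A}}_{K,L}^{t})^{\tr}$. The latter agrees with the definition $\mathbb{E}[\sum_t\bar{z}_t\bar{z}_t^{\tr}]$ after exchanging expectation and the sum. Both series converge absolutely because $\rho(\hat{\mathcal{A}}_{K,L})<1$: for some $c>0$ and $\rho(\hat{\mathcal{A}}_{K,L})<\gamma<1$ we have $\|\hat{\mathcal{A}}_{K,L}^t\|\le c\gamma^t$. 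This bound also justifies the exchanges by dominated or monotone convergence, since every summand is nonnegative and $\mathbb{E}\|\bar{z}_0\|^2={\rm Tr}(Y)<\infty$.

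Separating the $t=0$ term and reindexing shows that $S_{K,L}$ satisfies \eqref{eq.lyapunov_equation} and $\Omega_{K,L}$ satisfies \eqref{eq.lyapunov_equation_sigma}. Both are positive semidefinite, since $\hat{Q}\succeq0$ (it is a sum of PSD terms) and $Y\succeq0$.

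Uniqueness follows from a standard argument. The difference $D$ of two solutions satisfies $D=\hat{\mathcal{A}}_{K,L}^{\tr}D\hat{\mathcal{A}}_{K,L}$. Iterating gives $D=(\hat{\mathcal{A}}_{K,L}^{k})^{\tr}D\hat{\mathcal{A}}_{K,L}^{k}\to0$ as $k\to\infty$, so $D=0$; the same argument applies to $\Omega$. Equivalently, $I-\hat{\mathcal{A}}_{K,L}^{\tr}\otimes\hat{\mathcal{A}}_{K,L}^{\tr}$ is invertible because its eigenvalues $1-\lambda_i\lambda_j$ are nonzero.

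The final step obtains the two trace formulas. Using ${\rm Tr}(\bar{z}_0^{\tr}M\bar{z}_0)={\rm Tr}(M\bar{z}_0\bar{z}_0^{\tr})$ and linearity of expectation gives $J(K,L)={\rm Tr}(S_{K,L}Y)$. Cycling the trace inside each summand instead gives
\begin{equation}
\nonumber
J(K,L)=\sum_t{\rm Tr}\big(\hat{Q}\hat{\mathcal{A}}_{K,L}^tY(\hat{\mathcal{A}}_{K,L}^t)^{\tr}\big)={\rm Tr}(\Omega_{K,L}\hat{Q}).
\end{equation}

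The only point needing care is stability of the augmented matrix under merely $K\in\mathbb{K}$ and $L\in\mathbb{L}$. The triangular structure produced by the transformation $T$ settles this immediately, so I expect no genuine obstacle beyond justifying the interchange of expectation and infinite sum.
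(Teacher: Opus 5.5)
Your proof is correct and takes essentially the same route as the paper. The paper cites this lemma and motivates $S_{K,L}$ as the limit of the Bellman recursion $S_t=\hat{Q}+\hat{\mathcal{A}}_{K,L}^{\tr}S_{t+1}\hat{\mathcal{A}}_{K,L}$, whose iterates are exactly the partial sums of your series; it also invokes the separation principle for closed-loop stability, which your block-triangular argument verifies directly. Your version is more complete than the paper's sketch, since you explicitly justify convergence, the interchange of expectation and summation, and uniqueness of the Lyapunov solutions.
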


To facilitate subsequent analysis involving the inversion of positive definite matrices, some requirements are given: 

\begin{assumption}
\label{assumption.independent}
The initial correlation $Y$ is independent of the controller and observer gains, and is strictly positive definite. 
\end{assumption}

The above assumptions apply in the sequel. Different from \texttt{OD-LQR}, LQG minimizes a limiting average cost, yields a correlation representing the steady-state covariance, which inherently depends on controller and observer gains. Based on the derived cost function, we formulate the \texttt{OD-LQR} problem: 
\begin{problem}[Optimization for \texttt{OD-LQR}]
\label{problem.OD-LQR}
Suppose $\bar{z}_0$ follows $\mathcal{B}$. The optimal \texttt{OD-LQR} control is formulated as: 
\begin{equation}  
\nonumber
\begin{aligned}
\min_{K,L} \quad  J(K,L) &= \min_{K,L} \;\; \mathbb{E}_{\bar{z}_0 \sim \mathcal{B}}\left[\sum_{t=0}^{\infty}\left(x_t^{\tr} Q x_t + u_t^{\tr} R u_t\right) \right] \\
\text{\rm subject to} \quad &\eqref{eq.state_function}, \ \eqref{eq:observer-based-controller}, \ K\in \mathbb{K}, \ L\in \mathbb{L},
\end{aligned}
\end{equation}
where the cost function $J(K,L)$ is calculated by~\eqref{eq.cost_in_P}, and stabilizing sets $\mathbb{K}$ and $\mathbb{L}$ are defined in~\eqref{eq:stabilizing-K}. 
\end{problem}

The following gives the connectivity of stabilizing sets~\cite{bu2020topological}. 

\begin{lemma}[Domain Connectivity]
The sets of stabilizing controller gain $\mathbb{K}$ and observer gain $\mathbb{L}$ are path-connected. 
\end{lemma}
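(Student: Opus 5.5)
The plan is to treat the two sets separately, since $\mathbb{K}$ and $\mathbb{L}$ are defined independently and $\mathbb{L}$ reduces to $\mathbb{K}$ by duality. Indeed, $\rho(A-LC)=\rho(A^{\tr}-C^{\tr}L^{\tr})$, so the transpose map $L\mapsto L^{\tr}$ is a linear homeomorphism from $\mathbb{L}$ onto the set $\{K'\in\mathbb{R}^{d\times n}:\rho(A^{\tr}-C^{\tr}K')<1\}$. This is the stabilizing state-feedback set for the pair $(A^{\tr},C^{\tr})$, which is controllable by observability of $(C,A)$ under \Cref{assumption.control_observe}. It therefore suffices to prove that the set of stabilizing state-feedback gains for a controllable (or merely stabilizable) pair $(A,B)$ is path-connected; both claims then follow.

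For $\mathbb{K}$, I would use the standard Lyapunov/LMI change of variables. By the discrete Lyapunov theorem, $K\in\mathbb{K}$ if and only if there is $X\succ 0$ with $(A-BK)X(A-BK)^{\tr}-X\prec 0$. Setting $W=KX$ and applying a Schur complement, this is equivalent to
\begin{equation}
\nonumber
\begin{bmatrix} X & AX-BW\\ (AX-BW)^{\tr} & X\end{bmatrix}\succ 0 .
\end{equation}
The set $\mathcal{S}$ of pairs $(X,W)$ satisfying this strict LMI (which forces $X\succ 0$) is convex, hence path-connected. The map $\pi:(X,W)\mapsto WX^{-1}$ is continuous on $\mathcal{S}$, and its image is exactly $\mathbb{K}$: every stabilizing $K$ has some Lyapunov $X$, and conversely every feasible pair yields a stabilizing $WX^{-1}$. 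The continuous image of a path-connected set is path-connected, which gives the claim. Nonemptiness comes from controllability, e.g.\ the LQR gain $K^\star$ lies in $\mathbb{K}$.

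The main obstacle is checking that the Schur-complement equivalence is exact for strict inequalities. The identity $(A-BK)X = AX-BW$ and the Schur complement with $X\succ 0$ give $X-(AX-BW)X^{-1}(AX-BW)^{\tr}\succ 0$, which is $X-(A-BK)X(A-BK)^{\tr}\succ0$. This is the dual (controllability-Gramian) Lyapunov inequality, and it is equivalent to $\rho(A-BK)<1$. With this verified, the argument is short, and it matches the cited result of \cite{bu2020topological}.
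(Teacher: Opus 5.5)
Your proof is correct. The duality $\rho(A-LC)=\rho(A^{\tr}-C^{\tr}L^{\tr})$ reduces $\mathbb{L}$ to the stabilizing state-feedback set of the controllable pair $(A^{\tr},C^{\tr})$. The convex strict-LMI set, mapped onto $\mathbb{K}$ by the continuous surjection $(X,W)\mapsto WX^{-1}$, then gives path-connectedness.

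The paper gives no argument of its own and simply cites \cite{bu2020topological}. The standard proof of that result is, as far as I recall, this same convex LMI parametrization followed by a continuous image, so your proposal reproduces the intended justification.
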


In this work, the closed-form gradient expressions for $J(K,L)$ will be established to characterize the optimization geometry of \Cref{problem.OD-LQR}. Prior to delving into theoretical analysis, Section~\ref{subsec.block_Lyapunov} presents the block-wise Lyapunov equations and summarizes the associated Lyapunov stability theorems. 

\subsection{Block-wise Lyapunov Equations}
\label{subsec.block_Lyapunov}
The block-structured Lyapunov equations in~\eqref{eq.lyapunov_equation} and~\eqref{eq.lyapunov_equation_sigma} serve as foundational analytical tools in this work, where $S_{K,L}$ can be divided into the following blocks: 
\begin{equation} \label{eq:Pk-partition}
S_{K,L}=\begin{bmatrix}
   S_{11}  &  S_{12}\\
   S_{12}^{\tr}  & S_{22} 
\end{bmatrix}. 
\end{equation}
For notational conciseness, subscripts $K$ and $L$ in the submatrices of $S_{K,L}$ and $\Omega_{K,L}$ will be omitted when $K$ and $L$ dependencies are clear in the context. From~\eqref{eq.lyapunov_equation}, one has 
\begin{subequations}
\label{eq.block_lyapunov_P}
\begin{align}
&\begin{aligned}
\label{eq.block_lyapunov_P11}
S_{11} &= Q + K^{\tr} R K + \left(A - B K\right)^{\tr} S_{11} \left(A - B K\right),
\end{aligned}\\
&\begin{aligned}
\label{eq.block_lyapunov_P12}
S_{12} &= - K^{\tr} R K + \left(A - B K\right)^{\tr} S_{11} B K \\
&\quad \ + \left(A - B K\right)^{\tr} S_{12} \left(A - L C\right),
\end{aligned}\\
&\begin{aligned}
\label{eq.block_lyapunov_P22}
S_{22} &= K^{\tr} R K + K^{\tr} B^{\tr} S_{11} B K + K^{\tr} B^{\tr} S_{12} \left(A - L C\right) \\
&\quad \ + \left(A - L C\right)^{\tr} S_{12}^{\tr} B K + \left(A - L C\right)^{\tr} S_{22} \left(A - L C\right).
\end{aligned}
\end{align}
\end{subequations}

Similarly, let 
\begin{equation} \label{eq:Sigma-X-partition}
\Omega_{K,L}=\begin{bmatrix}
   \Omega_{11}  &  \Omega_{12}\\
   \Omega_{12}^{\tr}  & \Omega_{22} 
\end{bmatrix}, \;\; Y=\begin{bmatrix}
   Y_{11}  &  Y_{12}\\
   Y_{12}^{\tr}  & Y_{22} 
\end{bmatrix},
\end{equation}
where $Y_{22}$ denotes the initial correlation of observation error, and $Y_{12}^{\tr}$ indicates the initial cross-correlation between observation error and system state. From~\eqref{eq.lyapunov_equation_sigma}, we get
\begin{subequations}
\label{eq.block_lyapunov_Sigma}
\begin{align}
&\begin{aligned}
\label{eq.block_lyapunov_Sigma11}
\Omega_{11} &= Y_{11} + \left(A - B K\right) \Omega_{11} \left(A - B K\right)^{\tr} \\
&\quad \ + \left(A - B K\right) \Omega_{12} K^{\tr} B^{\tr} \\
&\quad \ + B K \Omega_{12}^{\tr} \left(A - B K\right)^{\tr} + B K \Omega_{22} K^{\tr} B^{\tr},
\end{aligned}\\
&\begin{aligned}
\label{eq.block_lyapunov_Sigma12}
\Omega_{12} &= Y_{12} + \left(A - B K\right) \Omega_{12} \left(A - L C\right)^{\tr} \\
& \quad \ + B K \Omega_{22} \left(A - L C\right)^{\tr}, 
\end{aligned}\\
&\begin{aligned}
\label{eq.block_lyapunov_Sigma22}
\Omega_{22} &= Y_{22} + \left(A - L C\right) \Omega_{22} \left(A - L C\right)^{\tr}.
\end{aligned}
\end{align}
\end{subequations}

Lyapunov stability theorems will be employed as analytical fundamentals. Essential propositions are compiled as follows: 
\begin{lemma}[Lyapunov Stability Theorems~\cite{gu2012discrete,lee2018primal}]
\label{lemma.Lyapunov_stability}
\ 
\begin{enumerate}
\item[a)] If $\rho(A) <1$ and $Q \in \mathbb{S}_{+}^n$, the Lyapunov equation $P = Q + A^{\tr} P A$ has a unique solution $P \in \mathbb{S}_{+}^n$.
\item[b)] Let $Q \in \mathbb{S}_{++}^n$. $\rho(A) <1$ if and only if there exists a unique $P \in \mathbb{S}_{++}^n$ such that $P = Q + A^{\tr} P A$.
\end{enumerate}
\end{lemma}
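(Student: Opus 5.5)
The plan is to argue both parts of \Cref{lemma.Lyapunov_stability} through the vectorized (series) representation of the discrete Lyapunov operator $\mathcal{L}_A(P) := P - A^{\tr} P A$.

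\textbf{Part a).} Assume $\rho(A)<1$ and $Q \in \mathbb{S}_+^n$.
\begin{enumerate}
\item[1)] \emph{Existence.} Define the candidate $P := \sum_{k=0}^{\infty} (A^{\tr})^k Q A^k$. The series converges absolutely, since $\rho(A)<1$ gives $\|A^k\| \le c\,\gamma^k$ for some $\gamma\in(\rho(A),1)$ and some constant $c$ (Gelfand's formula).
\item[2)] \emph{Properties of $P$.} Each term is symmetric and positive semi-definite, so $P \in \mathbb{S}_+^n$. Substituting $P$ into $Q + A^{\tr} P A$ shifts the summation index by one and gives back $P$.
\item[3)] \emph{Uniqueness.} Vectorize the equation as $(I - A^{\tr}\otimes A^{\tr})\,\mathrm{vec}(P) = \mathrm{vec}(Q)$. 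The eigenvalues of $A^{\tr}\otimes A^{\tr}$ are the products $\lambda_i\lambda_j$, and $|\lambda_i\lambda_j|<1$, so $1$ is not an eigenvalue. Hence $I - A^{\tr}\otimes A^{\tr}$ is invertible and the solution is unique, even within all of $\mathbb{R}^{n\times n}$.
\end{enumerate}

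\textbf{Part b), necessity.} If $\rho(A)<1$, part a) gives a unique solution $P$. Its series contains the $k=0$ term $Q \succ 0$, and all remaining terms are $\succeq 0$, so $P \succeq Q \succ 0$.

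\textbf{Part b), sufficiency.} Suppose there exists $P \succ 0$ with $P = Q + A^{\tr} P A$ and $Q \succ 0$.
\begin{enumerate}
\item[1)] Take any eigenpair $A v = \lambda v$ with $v \neq 0$, where $v$ may be complex.
\item[2)] Multiply the equation by $v^{*}$ on the left and $v$ on the right:
\[
v^{*}Pv = v^{*}Qv + |\lambda|^2 v^{*}Pv .
\]
\item[3)] This rearranges to $(1-|\lambda|^2)\, v^{*}Pv = v^{*}Qv > 0$. Since $v^{*}Pv>0$, we get $|\lambda|<1$ for every eigenvalue, so $\rho(A)<1$.
\item[4)] Uniqueness of this $P$ then follows from part a) applied to the now-established condition $\rho(A)<1$.
\end{enumerate}

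\textbf{Main obstacle.} There is no real obstacle, since these are classical facts and could also simply be cited from~\cite{gu2012discrete,lee2018primal}. The one point needing care is the complex eigenvector in the sufficiency argument of part b). One should use the Hermitian form $v^{*}(\cdot)v$ rather than $v^{\tr}(\cdot)v$, because with the plain transpose the quantity $v^{\tr}Pv$ need not be positive for complex $v$.
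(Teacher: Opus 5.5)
Your proof is correct. The series $\sum_{k\ge 0}(A^{\tr})^k Q A^k$ gives existence and positivity, the invertibility of $I - A^{\tr}\otimes A^{\tr}$ gives uniqueness, and the Hermitian eigenvector test gives the converse; using $v^{*}(\cdot)v$ rather than $v^{\tr}(\cdot)v$ is the right care to take. The paper gives no proof here and only cites the result from the classical Lyapunov-stability literature, where this standard argument appears; your step 4 in the sufficiency part is harmless but unnecessary, since uniqueness is already a hypothesis in that direction.
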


\section{Gradients and Stationary Points}
\label{sec.gradient}
This section first establishes the closed-form gradients of the \texttt{OD-LQR} cost with respect to feedback controller gain $K$ and state observer gain $L$. Then, we investigate the stationary point where the gradients vanish and establish its relationship between the standard LQR controller and standard observer. Finally, we derive the stationary point of the \texttt{OD-LQR} problem through the Sylvester equation and explore the specific conditions under which the stationary point collapses into the standard controller-observer pair. 

\subsection{The Gradient of the \texttt{OD-LQR} Cost}
The gradient of \texttt{OD-LQR} problem is the basis for analyzing stationary points. \Cref{lemma.gradient} develops analytical expressions for the gradients of the \texttt{OD-LQR} cost function with respect to the controller and observer gains. 
\begin{lemma}[Policy Gradient]
\label{lemma.gradient}
Given any stabilizing controller gain $K \in \mathbb{K}$ and observer gain $L\in \mathbb{L}$, the gradients are 
\begin{equation}  
\label{eq.gradient}
\begin{aligned}
\nabla_{K} J(K,L) &= 2(R K \bar{F} - \hat{B}^{\tr} S_{K,L} \hat{\mathcal{A}}_{K,L})\Omega_{K,L} \bar{F}^{\tr},\\
\nabla_L J(K,L) &= 2\hat{F} S_{K,L} \hat{\mathcal{A}}_{K,L} \Omega_{K,L}\hat{C}^{\tr}. 
\end{aligned}
\end{equation}
\end{lemma}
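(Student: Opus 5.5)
The proof is a first-order perturbation of the Lyapunov equation~\eqref{eq.lyapunov_equation}, followed by the usual adjoint trick that moves the differential of $S_{K,L}$ onto $\Omega_{K,L}$ through~\eqref{eq.lyapunov_equation_sigma}. By \Cref{lemma.OD-LQR_cost}, $J(K,L)={\rm Tr}(S_{K,L}Y)$, and $Y$ does not depend on $(K,L)$ by \Cref{assumption.independent}. So it suffices to compute the differential $dS$ of $S_{K,L}$ under a perturbation $(dK,dL)$ and then evaluate ${\rm Tr}(dS\,Y)$. The set $\mathbb{K}\times\mathbb{L}$ is open and $\rho(\hat{\mathcal{A}}_{K,L})<1$ there, since $\hat{\mathcal{A}}_{K,L}$ is block upper triangular with diagonal blocks $A-BK$ and $A-LC$. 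Hence by \Cref{lemma.Lyapunov_stability} the solution $S_{K,L}$ is unique. It is also a rational, hence smooth, function of $(K,L)$, being the solution of the invertible linear system $(I-\hat{\mathcal{A}}^{\tr}\otimes\hat{\mathcal{A}}^{\tr})\,\mathrm{vec}(S)=\mathrm{vec}(\hat Q)$. Differentiation is therefore legitimate.

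\emph{Step 1: differentiate.} From $\hat{\mathcal{A}}_{K,L}=\bar A-\hat B K\bar F+\hat F^{\tr}L\hat C$ we get $d\hat{\mathcal{A}}=-\hat B\,dK\,\bar F+\hat F^{\tr}dL\,\hat C$. From $\hat Q={\rm diag}(Q,0)+\bar F^{\tr}K^{\tr}RK\bar F$ we get $d\hat Q=\bar F^{\tr}(dK^{\tr}RK+K^{\tr}R\,dK)\bar F$, and $d\hat Q=0$ for $dL$. Differentiating~\eqref{eq.lyapunov_equation} then gives
\begin{equation}
\nonumber
dS=\hat{\mathcal{A}}^{\tr}dS\,\hat{\mathcal{A}}+E,\quad E:=d\hat Q+d\hat{\mathcal{A}}^{\tr}S\hat{\mathcal{A}}+\hat{\mathcal{A}}^{\tr}S\,d\hat{\mathcal{A}}.
\end{equation}
Unrolling this stable recursion, $dS=\sum_{t\ge0}(\hat{\mathcal{A}}^{\tr})^tE\hat{\mathcal{A}}^t$. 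Therefore ${\rm Tr}(dS\,Y)={\rm Tr}\big(E\sum_t\hat{\mathcal{A}}^tY(\hat{\mathcal{A}}^{\tr})^t\big)={\rm Tr}(E\,\Omega_{K,L})$, using~\eqref{eq.lyapunov_equation_sigma}. An equivalent route is to multiply the $dS$ equation by $\Omega$, the $\Omega$ equation by $dS$, and subtract traces.

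\emph{Step 2: read off gradients.} Because $S$ and $\Omega$ are symmetric, ${\rm Tr}(E\Omega)={\rm Tr}(d\hat Q\,\Omega)+2{\rm Tr}(d\hat{\mathcal{A}}^{\tr}S\hat{\mathcal{A}}\Omega)$.

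For $K$: ${\rm Tr}(d\hat Q\,\Omega)=2{\rm Tr}(dK^{\tr}RK\bar F\Omega\bar F^{\tr})$, and $2{\rm Tr}(d\hat{\mathcal{A}}^{\tr}S\hat{\mathcal{A}}\Omega)=-2{\rm Tr}(\bar F^{\tr}dK^{\tr}\hat B^{\tr}S\hat{\mathcal{A}}\Omega)=-2{\rm Tr}(dK^{\tr}\hat B^{\tr}S\hat{\mathcal{A}}\Omega\bar F^{\tr})$. Matching with $dJ={\rm Tr}(dK^{\tr}\nabla_KJ)$ gives $\nabla_KJ=2(RK\bar F-\hat B^{\tr}S\hat{\mathcal{A}})\Omega\bar F^{\tr}$.

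For $L$: $dJ=2{\rm Tr}(\hat C^{\tr}dL^{\tr}\hat F S\hat{\mathcal{A}}\Omega)=2{\rm Tr}(dL^{\tr}\hat FS\hat{\mathcal{A}}\Omega\hat C^{\tr})$. This gives $\nabla_LJ=2\hat FS\hat{\mathcal{A}}\Omega\hat C^{\tr}$, matching~\eqref{eq.gradient}.

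The main obstacle is only bookkeeping. The one point that needs real justification is the adjoint identity ${\rm Tr}(dS\,Y)={\rm Tr}(E\,\Omega)$, which requires convergence of the Neumann series. That convergence follows from $\rho(\hat{\mathcal{A}}_{K,L})<1$ via the triangular structure, and uniqueness of both Lyapunov solutions is supplied by \Cref{lemma.Lyapunov_stability}. One must also handle the transposes carefully so that the factor $2$ and the placement of $\bar F^{\tr}$ and $\hat C^{\tr}$ come out as stated.
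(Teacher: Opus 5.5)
Your proof is correct and is essentially the paper's argument. The paper differentiates the Bellman recursion $V_{K,L}(\bar z_0)=\bar z_0^{\tr}\hat Q\bar z_0+V_{K,L}(\hat{\mathcal{A}}_{K,L}\bar z_0)$ pointwise in $\bar z_0$, unrolls it along the trajectory and averages over $\mathcal{B}$; this is the trajectory-level form of your series $dS=\sum_t(\hat{\mathcal{A}}^{\tr})^tE\hat{\mathcal{A}}^t$ paired with $Y$, while your matrix-level trace duality ${\rm Tr}(dS\,Y)={\rm Tr}(E\,\Omega_{K,L})$ additionally justifies differentiability and convergence, which the paper takes for granted.
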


\begin{proof}
The proof methodology parallels the state-feedback LQR framework~\cite[Lemma 1]{fazel2018global}. According to the specified Lyapunov equation~\eqref{eq.lyapunov_equation}, the cost function of $\bar{z}_0$ is 
\begin{equation}
\nonumber
\begin{aligned}
V_{K,L}(\bar{z}_0) &= \bar{z}_0^{\tr} (\hat{Q} + \hat{\mathcal{A}}_{K,L}^{\tr} S_{K,L}\hat{\mathcal{A}}_{K,L})\bar{z}_0\\
&= \bar{z}_0^{\tr} \hat{Q} \bar{z}_0 + V_{K,L}(\hat{\mathcal{A}}_{K,L}\bar{z}_0).
\end{aligned}
\end{equation}
Taking the gradient of $V_{K,L}(\bar{z}_0)$ with respect to $K$, one has
\begin{equation}
\nonumber
\resizebox{1.0\hsize}{!}{$
\begin{aligned}
\nabla_{K} V_{K,L}(\bar{z}_0) &= 2(R K \bar{F} - \hat{B}^{\tr} S_{K,L}\hat{\mathcal{A}}_{K,L}) \bar{z}_0\bar{z}_0^{\tr} \bar{F}^{\tr} + \bar{z}_1^{\tr} \nabla_{K} S_{K,L} \bar{z}_1\\
&= 2(R K \bar{F} - \hat{B}^{\tr} S_{K,L}\hat{\mathcal{A}}_{K,L}) \bar{z}_0\bar{z}_0^{\tr} \bar{F}^{\tr} + \nabla_{K} V_{K,L}(\bar{z}_1)\\
&= 2(R K \bar{F} - \hat{B}^{\tr} S_{K,L}\hat{\mathcal{A}}_{K,L}) \sum_{t=0}^{\infty}(\bar{z}_t\bar{z}_t^{\tr}) \bar{F}^{\tr},
\end{aligned}
$}
\end{equation}
where the last equation follows by recursion and the fact that $\bar{z}_{t+1} = \hat{\mathcal{A}}_{K,L}\bar{z}_t = (\bar{A} - \hat{B} K \bar{F} + \hat{F}^{\tr} L \hat{C})\bar{z}_t$. 

Similarly, we get
\begin{equation}
\nonumber
\begin{aligned}
\nabla_L V_{K,L}(\bar{z}_0) 
&= 2\hat{F} S_{K,L}\hat{\mathcal{A}}_{K,L}\bar{z}_0\bar{z}_0^{\tr} \hat{C}^{\tr} + \bar{z}_1^{\tr} \nabla_L S_{K,L} \bar{z}_1\\
&= 2\hat{F} S_{K,L}\hat{\mathcal{A}}_{K,L}\bar{z}_0\bar{z}_0^{\tr} \hat{C}^{\tr} + \nabla_L V_{K,L}(\bar{z}_1)\\
&= 2\hat{F} S_{K,L}\hat{\mathcal{A}}_{K,L}\sum_{t=0}^{\infty}(\bar{z}_t\bar{z}_t^{\tr}) \hat{C}^{\tr}.
\end{aligned}
\end{equation}
By calculating the stochastic average of gradient expressions across the initial distribution $\mathcal{B}$, we have~\eqref{eq.gradient}.
\end{proof}

The obtained gradient expression is related to the positive semi-definite solutions $S_{K,L}$ and $\Omega_{K,L}$ to the Lyapunov equations. Section~\ref{subsec.optimal_ctrl_obsv} will utilize $S_{K,L}$ and $\Omega_{K,L}$ to represent the standard LQR controller and standard observer minimizing the accumulated estimation variance, respectively. Before that, we will present the results about gradient dominance. 
\begin{lemma}[Gradient Dominance]
\label{lemma.gradient_dominance}
Assuming that the closed-loop matrix $\hat{\mathcal{A}}_{K,L}$ admits a uniform spectral norm, i.e., $\|\hat{\mathcal{A}}_{K,L}\|_2 \le \gamma < 1$, there exist local attraction radii $r_K > 0$ and $r_L > 0$, such that for all controller gain $K$ and observer gain $L$ satisfying $\|\Delta K\|_F \le r_K$ and $\|\Delta L\|_F \le r_L$, the cost difference is upper bounded by 
\begin{equation}
\nonumber
J(K,L) - J(K^{\ddagger},L^{\ddagger}) \le \frac{\|\nabla_K J(K, L)\|_F^2}{2 (\alpha_{K2} - \beta_{K2})} + \frac{\|\nabla_L J(K, L)\|_F^2}{2 (\alpha_{L2} - \beta_{L2})} , 
\end{equation}
where the coefficients associated with the controller gain $K$ and the observer gain $L$ exhibit strictly decoupled structures. 
\end{lemma}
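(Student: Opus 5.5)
We establish the inequality through a cost-difference identity followed by completing the square separately in $\Delta K := K - K^{\ddagger}$ and $\Delta L := L - L^{\ddagger}$. Here $(K^{\ddagger},L^{\ddagger})$ is a stationary point.

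\textbf{Step 1: cost-difference identity.} Fix two stabilizing pairs $(K,L)$ and $(K',L')$, with $K'=K^{\ddagger}$ and $L'=L^{\ddagger}$. Write $\hat{\mathcal{A}}' = \hat{\mathcal{A}}_{K,L} + \Delta\hat{\mathcal{A}}$, where
\[
\Delta\hat{\mathcal{A}} = -\hat{B}(K'-K)\bar{F} + \hat{F}^{\tr}(L'-L)\hat{C}.
\]
Using \eqref{eq.lyapunov_equation} for $S_{K,L}$ and \eqref{eq.lyapunov_equation_sigma} for $\Omega_{K',L'}$, telescoping along the trajectory generated by $(K',L')$ gives
\[
J(K',L') - J(K,L) = {\rm Tr}\big(\Omega_{K',L'}\,E(K,L;K',L')\big).
\]
The matrix $E$ is the one-step advantage matrix
\[
E = \hat{Q}' - \hat{Q} + \hat{\mathcal{A}}'^{\tr}S_{K,L}\hat{\mathcal{A}}' - \hat{\mathcal{A}}_{K,L}^{\tr}S_{K,L}\hat{\mathcal{A}}_{K,L},
\]
as in \cite[Lemma 10]{fazel2018global}, adapted to the augmented system \eqref{eq.closed-loop-system_2}.

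\textbf{Step 2: expand and split $E$.} Expanding $E$ produces three kinds of terms:
\begin{itemize}
\item terms linear in $(K'-K)$, with coefficient $E_K := RK\bar{F} - \hat{B}^{\tr}S_{K,L}\hat{\mathcal{A}}_{K,L}$;
\item terms linear in $(L'-L)$, with coefficient $E_L := \hat{F}S_{K,L}\hat{\mathcal{A}}_{K,L}$;
\item a quadratic form in $(K'-K,\,L'-L)$ weighted by $R+\hat{B}^{\tr}S\hat{B}$, $\hat{C}^{\tr}\hat{F}S\hat{F}^{\tr}\hat{C}$, and a cross term $\hat{B}^{\tr}S\hat{F}^{\tr}$.
\end{itemize}
By \Cref{lemma.gradient}, $\nabla_K J = 2E_K\Omega_{K,L}\bar{F}^{\tr}$ and $\nabla_L J = 2E_L\Omega_{K,L}\hat{C}^{\tr}$.

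\textbf{Step 3: decouple the cross term and complete the square.} Bound the cross term by Young's inequality, splitting it into a $\Delta K$ part and a $\Delta L$ part. This produces effective curvature $\alpha_{K2}-\beta_{K2}$ for $K$ and $\alpha_{L2}-\beta_{L2}$ for $L$, where:
\begin{itemize}
\item $\alpha_{K2}$ and $\alpha_{L2}$ are lower eigenvalue bounds of the diagonal blocks, combined with $\sigma_{\min}$ of $\bar{F}\Omega\bar{F}^{\tr}$ and $\hat{C}\Omega\hat{C}^{\tr}$;
\item $\beta_{K2}$ and $\beta_{L2}$ absorb the Young-split cross term together with the mismatch between $\Omega_{K',L'}$ and $\Omega_{K,L}$.
\end{itemize}
Minimizing the resulting strongly convex quadratic over $(K'-K)$ and $(L'-L)$ independently gives the lower bound $-\|\nabla_K J\|_F^2/(2(\alpha_{K2}-\beta_{K2}))$, and similarly for $L$. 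Negating yields the claim. Because the cross term is split before minimizing, the coefficients come out strictly decoupled, as the statement asserts.

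\textbf{Main obstacle.} The difficulty is replacing $\Omega_{K',L'}$ by $\Omega_{K,L}$, so that the gradients appear, and securing positive curvature. The true Hessian need not be positive definite globally, and $\bar{F}\Omega\bar{F}^{\tr}$ and $\hat{C}\Omega\hat{C}^{\tr}$ may be only semidefinite.

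To handle this, use the uniform bound $\|\hat{\mathcal{A}}\|_2\le\gamma<1$. It gives $\|\Omega\|\le\|Y\|/(1-\gamma^2)$ and a Lipschitz perturbation bound $\|\Omega_{K',L'}-\Omega_{K,L}\|\le c(\|\Delta K\|_F+\|\Delta L\|_F)$, obtained by subtracting the two Lyapunov equations and summing the Neumann series. \Cref{assumption.independent} gives $\Omega\succeq Y\succ0$, and hence $\bar{F}\Omega\bar{F}^{\tr}\succ0$. For $\hat{C}\Omega\hat{C}^{\tr}$, use the full row rank of $C$ from \Cref{assumption.control_observe}.

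Finally, choose $r_K$ and $r_L$ small enough that these perturbation terms, collected into $\beta_{K2}$ and $\beta_{L2}$, stay below $\alpha_{K2}$ and $\alpha_{L2}$. This makes both denominators positive on the stated neighbourhood.
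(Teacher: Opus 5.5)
Your Steps 1 and 2 are sound, and they mirror the paper's argument. You pair the current value matrix $S_{K,L}$ with the stationary occupancy $\Omega^{\ddagger}$ and then perturb $\Omega$. The paper does the reverse: it pairs $S^{\ddagger}$ with the current $\Omega_{K,L}$, writing $J(K,L)-J(K^{\ddagger},L^{\ddagger})={\rm Tr}(\mathcal{Z}\Omega_{K,L})$. It then substitutes $E_K^{\ddagger}=E_K-\Delta E_K$ so that $\nabla_K J(K,L)$ appears, and perturbs $S$ instead, via $\Delta S=\mathcal{Z}+\hat{\mathcal{A}}_{K,L}^{\tr}\Delta S\,\hat{\mathcal{A}}_{K,L}$ and $\|\Delta S\|_F\le\|\mathcal{Z}\|_F/(1-\gamma^2)$. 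Either bookkeeping is fine. The gap is in your last step, where you claim that taking $r_K,r_L$ small makes $\beta_{K2}<\alpha_{K2}$ and $\beta_{L2}<\alpha_{L2}$.

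That step cannot work.
\begin{itemize}
\item Your mismatch term $2\,{\rm Tr}((K^{\ddagger}-K)^{\tr}E_K(\Omega^{\ddagger}-\Omega_{K,L})\bar{F}^{\tr})$ is bounded, via your Lipschitz estimate, by $2c\|E_K\|_2\|\bar{F}\|_2\|\Delta K\|_F(\|\Delta K\|_F+\|\Delta L\|_F)$ plus cubic terms. Moreover $E_K\to E_K^{\ddagger}$ as $(K,L)\to(K^{\ddagger},L^{\ddagger})$.
\item Stationarity only gives $E_K^{\ddagger}\Omega^{\ddagger}\bar{F}^{\tr}=0$ with $\Omega^{\ddagger}\bar{F}^{\tr}\in\mathbb{R}^{2n\times n}$. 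In state-feedback LQR, $\Sigma_K\succ 0$ forces $E_K=0$; here $E_K^{\ddagger}\neq 0$ in general.
\item Hence the mismatch puts a radius-independent coefficient in front of $\|\Delta K\|_F^2$. That is the same power as the curvature term $\alpha_{K2}\|\Delta K\|_F^2$. The Young-split cross term is likewise second order with a fixed coefficient.
\end{itemize}
Shrinking the radii leaves this comparison unchanged; it only lets you dominate the cubic remainders. Near $(K^{\ddagger},L^{\ddagger})$ both sides of the claimed inequality are second order. So positivity of $\alpha-\beta$ is a second-order (Hessian-type) property of the stationary point, not something a choice of radius can supply.

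This is why the paper's $\beta_{K2}$ carries the factor $C_{K1}=2\|\bar{F}\|_2\|E_K^{\ddagger}\|_2$. It is also why the paper does not derive $\alpha_{K2}-\beta_{K2}>0$ and $\alpha_{L2}-\beta_{L2}>0$, but imposes them as an explicit hypothesis (``sufficient stability margin''). Only afterwards does it set $r_K=(\alpha_{K2}-\beta_{K2})/(2\beta_{K3})$ and $r_L=(\alpha_{L2}-\beta_{L2})/(2\beta_{L3})$, and these radii serve solely to absorb $\beta_{K3}\|\Delta K\|_F^3$ and $\beta_{L3}\|\Delta L\|_F^3$.

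To repair your proof, split the error into its second- and third-order parts. Take the second-order positivity as a hypothesis, or derive it from a nondegeneracy assumption at $(K^{\ddagger},L^{\ddagger})$. Then use the radii only for the cubic terms.
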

\begin{proof}
See \Cref{sec.gradient_dominance} for the proof. 
\end{proof}

\subsection{Standard LQR Controller and Standard Observer}
\label{subsec.optimal_ctrl_obsv}
According to the classical control theory, the standard optimal state-feedback controller gain for the linear system~\eqref{eq.state_function} under the quadratic cost~\eqref{eq.objective} is 
\begin{equation}
\label{eq.standard_K}
K^{\star}:= (R + B^{\tr} {\hat{S}^{\star}} B)^{-1} B^{\tr} {\hat{S}^{\star}} A,
\end{equation}
where $\hat{S}^{\star}$ is the unique positive definite solution to
\begin{equation}
\label{eq.P_riccati}
{\hat{S}^{\star}} = Q + A^{\tr} {\hat{S}^{\star}} A - A^{\tr} {\hat{S}^{\star}} B (R + B^{\tr} {\hat{S}^{\star}} B)^{-1} B^{\tr} {\hat{S}^{\star}} A.
\end{equation}
Note that the above equation~\eqref{eq.P_riccati} is equivalent to 
\begin{equation}
\label{eq.P_Kstar}
\hat{S}^{\star} = Q + {K^{\star}}^{\tr} R K^{\star} + (A - B K^{\star})^{\tr} \hat{S}^{\star} (A - B K^{\star}),
\end{equation}
which is similar to the block-wise Lyapunov equation~\eqref{eq.block_lyapunov_P11}. 

If $(K,\mathcal{A}_{K,L})$ is observable, then the dynamic controller~\eqref{eq:observer-based-controller} is referred as \textit{observable}. The set of observable controllers under the standard LQR controller gain $K^{\star}$ is denoted as 
\begin{equation}
\nonumber
\mathbb{L}_o := \left\{ L\in \mathbb{R}^{n \times d}: (K^{\star},\mathcal{A}_{K^{\star},L}) \text{ is observable} \right\}.
\end{equation}
The following result on the unique positive definite solution of the Lyapunov equation directly follows from \Cref{lemma.Lyapunov_stability}. 
\begin{lemma}
\label{lemma.positive_P}
If $L\in \mathbb{L}\cap\mathbb{L}_o$ and $\rho\left(\mathcal{A}_{K^{\star},L}\right) < 1$, the solution $S_{K^{\star},L}$ to~\eqref{eq.lyapunov_equation} is unique and positive definite. Specifically, the blocks of the solution $S_{K^{\star},L}$ satisfy $S_{11}^{\star} = \hat{S}^{\star}$ and $S_{12}^{\star} = 0$. 
\end{lemma}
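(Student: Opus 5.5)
The plan is to compute the blocks of $S_{K^\star,L}$ directly from the block-wise equations \eqref{eq.block_lyapunov_P} with $K=K^\star$. We show that $S_{K^\star,L}$ is block diagonal with $S_{11}=\hat S^\star\succ0$ and $S_{22}\succ0$, which gives positive definiteness. Uniqueness comes first and is immediate. The transformed closed-loop matrix $\hat{\mathcal{A}}_{K^\star,L}$ is block upper triangular with diagonal blocks $A-BK^\star$ and $A-LC$. Both are Schur stable, since $K^\star\in\mathbb{K}$ by standard LQR theory and $L\in\mathbb{L}$. Hence $\rho(\hat{\mathcal{A}}_{K^\star,L})<1$. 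Since $\hat Q\succeq0$, \Cref{lemma.Lyapunov_stability}~a) yields a unique solution $S_{K^\star,L}\in\mathbb{S}^{2n}_+$ of \eqref{eq.lyapunov_equation}.

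\emph{Step 1 ($S_{11}$).} With $K=K^\star$, equation \eqref{eq.block_lyapunov_P11} is literally the Lyapunov equation \eqref{eq.P_Kstar} with stable matrix $A-BK^\star$. Uniqueness from \Cref{lemma.Lyapunov_stability}~a) forces $S_{11}=\hat S^\star$. Moreover, $\hat S^\star\succ0$ because $(Q^{1/2},A)$ is observable (\Cref{assumption.control_observe}), as in standard Riccati theory.

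\emph{Step 2 ($S_{12}$).} The inhomogeneous term in \eqref{eq.block_lyapunov_P12} is $\big[(A-BK^\star)^\tr\hat S^\star B-{K^\star}^\tr R\big]K^\star$. By \eqref{eq.standard_K}, ${K^\star}^\tr(R+B^\tr\hat S^\star B)=A^\tr\hat S^\star B$. Therefore $(A-BK^\star)^\tr\hat S^\star B-{K^\star}^\tr R=A^\tr\hat S^\star B-{K^\star}^\tr(R+B^\tr\hat S^\star B)=0$, and the term vanishes. What remains is the homogeneous Stein equation $S_{12}=(A-BK^\star)^\tr S_{12}(A-LC)$. Vectorized, its operator is $I-(A-LC)^\tr\otimes(A-BK^\star)^\tr$, whose eigenvalues are $1-\lambda_i\mu_j$ with $|\lambda_i\mu_j|<1$. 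This operator is therefore invertible, and $S_{12}=0$.

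\emph{Step 3 ($S_{22}$), the main point.} Substituting $S_{11}=\hat S^\star$ and $S_{12}=0$ into \eqref{eq.block_lyapunov_P22} gives
\[
S_{22}={K^\star}^\tr(R+B^\tr\hat S^\star B)K^\star+(A-LC)^\tr S_{22}(A-LC),
\]
so that $S_{22}=\sum_{t\ge0}\big((A-LC)^\tr\big)^t{K^\star}^\tr(R+B^\tr\hat S^\star B)K^\star(A-LC)^t$. Since $R+B^\tr\hat S^\star B\succ0$, $S_{22}\succ0$ holds if and only if $(K^\star,A-LC)$ is observable. The hypothesis $L\in\mathbb{L}_o$ only gives observability of $(K^\star,\mathcal{A}_{K^\star,L})$, so the obstacle is transferring observability between these two pairs. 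We handle it by noting that $A-LC=\mathcal{A}_{K^\star,L}+BK^\star$ is an output injection with output matrix $K^\star$. Observability is invariant under output injection, which follows from the PBH test: if $K^\star v=0$ then $(A-LC)v=\mathcal{A}_{K^\star,L}v$, so the unobservable eigenvectors of the two pairs coincide. Hence $(K^\star,A-LC)$ is observable and $S_{22}\succ0$.

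Combining the three steps, $S_{K^\star,L}=\mathrm{diag}(\hat S^\star,S_{22})\succ0$ with $S_{11}^\star=\hat S^\star$ and $S_{12}^\star=0$, as claimed.
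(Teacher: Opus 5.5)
Your proof is correct. For the two block identities it follows the paper's route:
\begin{itemize}
\item $S_{11}^{\star}=\hat S^{\star}$ by uniqueness of the solution of \eqref{eq.block_lyapunov_P11} at $K^{\star}$;
\item $S_{12}^{\star}=0$ because the homogeneous Stein equation $S_{12}=(A-BK^{\star})^{\tr}S_{12}(A-LC)$ has an invertible operator when every eigenvalue product has modulus below one.
\end{itemize}
Where you differ is that you close two points the paper passes over.

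First, you verify that the forcing term $[(A-BK^{\star})^{\tr}\hat S^{\star}B-{K^{\star}}^{\tr}R]K^{\star}$ in \eqref{eq.block_lyapunov_P12} cancels, using ${K^{\star}}^{\tr}(R+B^{\tr}\hat S^{\star}B)=A^{\tr}\hat S^{\star}B$. The paper only asserts this cancellation.

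Second, and more substantively, the paper gives no argument for positive definiteness. It says the lemma ``directly follows'' from \Cref{lemma.Lyapunov_stability}, but part b) there needs a positive definite weight. The weight $\hat Q$ is in general only semidefinite, for example whenever $K^{\star}$ has a nontrivial kernel. Your Step~3 fills this gap:
\begin{itemize}
\item block diagonality reduces positivity to $S_{22}$;
\item $S_{22}$ solves a Lyapunov equation with weight ${K^{\star}}^{\tr}(R+B^{\tr}\hat S^{\star}B)K^{\star}$;
\item the PBH output-injection argument transfers observability from $(K^{\star},\mathcal{A}_{K^{\star},L})$ to $(K^{\star},A-LC)$.
\end{itemize}
This is the only place where the hypothesis $L\in\mathbb{L}_o$ is actually used. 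Your argument also shows that the extra assumption $\rho(\mathcal{A}_{K^{\star},L})<1$ plays no role in the lemma.
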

\begin{proof}
Consider the block-wise Lyapunov equation~\eqref{eq.block_lyapunov_P11} with fixed dynamics system, where $S_{11}$ only depends on $K$. In particular, for the standard LQR controller $K^{\star}$, the closed-loop system $A - B K^{\star}$ is stable. According to \Cref{lemma.Lyapunov_stability}(a), the solution to~\eqref{eq.block_lyapunov_P11} with $K^{\star}$, denoted as $S_{11}^{\star}$, is unique. Since the block-wise Lyapunov equation~\eqref{eq.block_lyapunov_P11} shares the same structure as the Lyapunov equation~\eqref{eq.P_Kstar}, we have $S_{11}^{\star} = \hat{S}^{\star}$. 

Substituting the standard LQR controller gain $K^{\star}$ into the block-wise Lyapunov equation~\eqref{eq.block_lyapunov_P12} yields 
\begin{equation}
\label{eq.P_22-P_12-op}
\begin{aligned}
S_{12}^{\star} &= \left(A - B K^{\star}\right)^{\tr} S_{12}^{\star} \left(A - L C\right). 
\end{aligned}
\end{equation}
By adapting the analysis framework of the unique solution of the Lyapunov equation in~\cite[Theorem 8.2.2]{datta2004numerical}, and noting that $\rho(A - B K^{\star}) < 1$ and $\rho(A - L C) < 1$ imply that the product of any eigenvalue pair is strictly less than one, the above Sylvester  equation has a unique solution, namely $S_{12}^{\star} = 0$. 
\end{proof}

Next, the observer gain $L^{\star}$ that minimizes the accumulated estimation variance is introduced.
Define the estimation variance at $t$ under observer $L$ as $E_t:= \mathbb{E}_{\bar{z}_0 \sim \mathcal{B}} [(x_t-\xi_t)(x_t-\xi_t)^{\tr}]$. 
It is not hard to find that $E_0 = Y_{22}$. 
Subtracting~\eqref{eq.state_function} from~\eqref{eq:observer-based-controller}, we can derive that 
\begin{equation}
\label{eq.estimation_variance_rela}
E_{t+1} = \left(A - L C\right)^{t+1} E_0 \left(A^{\tr} - C^{\tr} L^{\tr}\right)^{t+1}.
\end{equation}

From~\eqref{eq.estimation_variance_rela}, we define the accumulated state estimation variance under a stabilizing observer $L \in \mathbb{L}$ as $\hat{\Omega}_{L} := \sum_{t=0}^{\infty}E_t$. 
Therefore, $\hat{\Omega}_{L}$ satisfies the following Lyapunov equation
\begin{equation}
\label{eq.accumulated_estimation_lyapu}
\hat{\Omega}_L = E_0 + (A-LC) \hat{\Omega}_L (A-LC)^{\tr}, 
\end{equation}
which is exactly the block-wise Lyapunov equation~\eqref{eq.block_lyapunov_Sigma22}. Thus, for any stabilizing observer $L \in \mathbb{L}$, we have $\Omega_{22} = \hat{\Omega}_L$. \Cref{prob.optimal_observer} further gives the definition of the standard observer. 

\begin{problem}[Standard Observer]
\label{prob.optimal_observer}
A standard observer of system~\eqref{eq.state_function} is defined as the one that minimizes the trace of the accumulated state estimation variance $\hat{\Omega}_L$, whose gain $L^{\star}$ should be the optimal solution to 
\begin{equation}
\label{eq.problem_of_observer}
\begin{aligned}
 \min_{L\in \mathbb{L}} \quad&{\rm Tr}(\hat{\Omega}_L)\\
\text{\rm subject to} \quad &~\eqref{eq.accumulated_estimation_lyapu}.
\end{aligned}
\end{equation}
\end{problem}

The cost function ${\rm Tr}(\hat{\Omega}_L)$ denotes the sum of estimation variance, which is analogous to LQR by viewing the estimation error dynamics with $L$ as the state-feedback gain. Moreover, the constraint acts as an evaluation equation for the cost, which compresses the infinite-horizon summation into a self-consistent equation. The following proposition provides the formulation to calculate the standard observer $L^{\star}$, which follows the solution to the Riccati equation~\cite[Prop. 3]{duan2024optimization}. 

\begin{proposition}[Standard Observer Gain]
\label{proposition.optimal_observer_1}
The standard observer gain is in the form of
\begin{equation}
\label{eq.standard_L}
L^{\star} = A \hat{\Omega}_{L^\star} C^{\tr} (C \hat{\Omega}_{L^\star} C^{\tr})^{-1} \in \mathbb{L},
\end{equation}
with $\hat{\Omega}_{L^\star}$ being the unique positive definite solution to 
\begin{equation}
\label{eq.sigma_riccati}
\hat{\Omega}_{L^\star} \!=\! E_0 \!+\! A \hat{\Omega}_{L^\star} A^{\tr} \!-\! A \hat{\Omega}_{L^\star} C^{\tr} (C \hat{\Omega}_{L^\star} C^{\tr})^{-1} C \hat{\Omega}_{L^\star} A^{\tr}.
\end{equation}
\end{proposition}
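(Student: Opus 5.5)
Since $\hat{\Omega}_L$ solves the Lyapunov equation \eqref{eq.accumulated_estimation_lyapu}, the plan is to treat \Cref{prob.optimal_observer} as a dual LQR problem. We write ${\rm Tr}(\hat{\Omega}_L)={\rm Tr}(\hat{\Omega}_L I_n)$ and view the estimation-error dynamics $e_{t+1}=(A-LC)e_t$ as a state-feedback system with ``dynamics'' $A^{\tr}$, ``input matrix'' $C^{\tr}$ and ``gain'' $L^{\tr}$. Let $P_L\succeq 0$ solve the adjoint equation $P_L = I_n + (A-LC)^{\tr}P_L(A-LC)$. Then ${\rm Tr}(\hat{\Omega}_L)={\rm Tr}(P_L E_0)$, and mimicking the proof of \Cref{lemma.gradient} gives the gradient
\[
\nabla_L {\rm Tr}(\hat{\Omega}_L) = -2P_L(A-LC)\hat{\Omega}_L C^{\tr}.
\]

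\textbf{Stationarity and the gain formula.} The first step is to show $P_L\succ 0$. By \Cref{lemma.Lyapunov_stability}(b), since $I_n\succ 0$ and $L\in\mathbb{L}$, this holds. Hence a stationary point satisfies $(A-LC)\hat{\Omega}_L C^{\tr}=0$, that is, $L C\hat{\Omega}_L C^{\tr} = A\hat{\Omega}_L C^{\tr}$. To invert $C\hat{\Omega}_L C^{\tr}$, note that $E_0=Y_{22}\succ 0$ by \Cref{assumption.independent}, and $\hat{\Omega}_L\succeq E_0$, so $\hat{\Omega}_L\succ 0$. Since $C$ has full row rank (\Cref{assumption.control_observe}), $C\hat{\Omega}_L C^{\tr}\succ 0$. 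This yields \eqref{eq.standard_L}.

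\textbf{Recovering the Riccati equation.} Substitute this $L$ back into \eqref{eq.accumulated_estimation_lyapu}. Because $(A-LC)\hat{\Omega}C^{\tr}=0$, expanding $(A-LC)\hat{\Omega}(A-LC)^{\tr}$ gives $A\hat{\Omega}A^{\tr}-L C\hat{\Omega}A^{\tr}$. With the formula for $L$, this becomes \eqref{eq.sigma_riccati}.

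\textbf{Existence, uniqueness and global optimality.} This is the step I expect to be the main obstacle. Equation \eqref{eq.sigma_riccati} is the filtering Riccati equation with zero measurement noise and process noise $E_0\succ 0$, i.e., the dual of \eqref{eq.P_riccati} with $R=0$. The zero measurement noise means the standard theory does not apply verbatim, but $C\hat{\Omega}C^{\tr}\succ 0$ keeps the inverse well-defined. I would argue as follows:
\begin{itemize}
\item[(i)] $\mathbb{L}$ is nonempty by observability of $(C,A)$, and ${\rm Tr}(\hat{\Omega}_L)$ is coercive on $\mathbb{L}$. It blows up as $\rho(A-LC)\to 1$ because $E_0\succ 0$, and as $\|L\|\to\infty$ because $\hat{\Omega}_L\succeq E_0+(A-LC)E_0(A-LC)^{\tr}$ and $C$ has full row rank. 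Hence a minimizer exists in $\mathbb{L}$; it is a stationary point, so it satisfies the equations above.
\item[(ii)] Uniqueness: for two stabilizing solutions, I would subtract the corresponding Lyapunov forms, in the spirit of the completion-of-squares identity
\[
\hat{\Omega}_L-\hat{\Omega}_{L^\star}=\sum_t (A-LC)^t (L-L^\star)C\hat{\Omega}_{L^\star}C^{\tr}(L-L^\star)^{\tr}\big((A-LC)^{\tr}\big)^t\succeq 0.
\]
This shows that any stabilizing Riccati solution is a matrix-wise minimum over $\mathbb{L}$, so it is unique and $L^\star$ is the global minimizer.
\end{itemize}
Verifying this identity carefully, as the analogue of the LQR cost-difference lemma of \cite{fazel2018global}, is the key computation. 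Alternatively, one may invoke \cite[Prop. 3]{duan2024optimization} directly.
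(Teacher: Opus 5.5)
Your argument is correct, and it takes a different route from the paper. The paper gives no derivation: it says only that the result ``follows the solution to the Riccati equation'' and defers to \cite[Prop. 3]{duan2024optimization}, i.e., it relies on duality with LQR Riccati theory. You instead give a self-contained variational proof in three steps.
\begin{enumerate}
\item The stationarity condition $(A-LC)\hat{\Omega}_L C^{\tr}=0$. This is the same condition the paper later uses in the proof of Theorem~\ref{theorem.stationary_point}.
\item Existence of a minimizer, by coercivity of ${\rm Tr}(\hat{\Omega}_L)$ on $\mathbb{L}$.
\item Global optimality, via a cost-difference identity.
\end{enumerate}
The identity you flag as the key computation does hold. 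Write $A-LC=(A-L^\star C)-(L-L^\star)C$ and use $(A-L^\star C)\hat{\Omega}_{L^\star}C^{\tr}=0$ to kill the cross terms. Then $\Delta:=\hat{\Omega}_L-\hat{\Omega}_{L^\star}$ satisfies $\Delta=(A-LC)\Delta(A-LC)^{\tr}+(L-L^\star)C\hat{\Omega}_{L^\star}C^{\tr}(L-L^\star)^{\tr}$ for every $L\in\mathbb{L}$.

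What your route buys:
\begin{itemize}
\item It handles the zero-measurement-noise case directly, with $C\hat{\Omega}C^{\tr}\succ 0$ playing the role of $R+B^{\tr}PB$. The paper does not address this point.
\item It proves more than the statement: $\hat{\Omega}_{L^\star}\preceq\hat{\Omega}_L$ for all $L\in\mathbb{L}$, and ${\rm Tr}(\hat{\Omega}_L-\hat{\Omega}_{L^\star})\ge\lambda_{\min}(C\hat{\Omega}_{L^\star}C^{\tr})\|L-L^\star\|_F^2$. So the minimizing gain is itself unique, and $L^\star$ minimizes any monotone function of $\hat{\Omega}_L$, not just the trace.
\end{itemize}
The citation's only advantage is brevity.

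One link is missing. Your uniqueness argument covers \emph{stabilizing} Riccati solutions, but the statement claims uniqueness among all positive definite solutions of \eqref{eq.sigma_riccati}. You can close this as follows.
\begin{enumerate}
\item Take any $\Omega\succ 0$ solving \eqref{eq.sigma_riccati} and set $L_\Omega:=A\Omega C^{\tr}(C\Omega C^{\tr})^{-1}$. Then $(A-L_\Omega C)\Omega C^{\tr}=0$, so \eqref{eq.sigma_riccati} reads $\Omega=E_0+(A-L_\Omega C)\Omega(A-L_\Omega C)^{\tr}$.
\item Since $E_0\succ 0$ and $\Omega\succ 0$, Lemma~\ref{lemma.Lyapunov_stability}(b), applied to $(A-L_\Omega C)^{\tr}$, gives $L_\Omega\in\mathbb{L}$. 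By uniqueness of the Lyapunov solution, $\Omega=\hat{\Omega}_{L_\Omega}$.
\item Hence every positive definite solution is stabilizing and comes from a stationary point, and your identity then forces it to coincide with $\hat{\Omega}_{L^\star}$.
\end{enumerate}
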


\subsection{Structure of the Stationary Point}
This section will further investigate the stationary point at which the gradients vanish. We will first reveal the relationship between the stationary point, the standard LQR controller $K^{\star}$, and the standard observer $L^{\star}$. Then, we will derive the expression the stationary point and discuss the special case of the derived stationary point collapsing into the standard pair. \Cref{theorem.stationary_point} establishes the observer achieving the optimal cost function under the standard LQR controller. 
\begin{theorem}
\label{theorem.stationary_point}
Given the standard LQR controller gain $K^{\star}$, the standard observer $L^{\star}$ defined in~\eqref{eq.standard_L} is a stationary point of $J(K^{\star}, L)$. If $L^{\star} \in \mathbb{L}_o$, $L^{\star}$ is the unique observable stationary point. Otherwise, no observable stationary point exists. 
\end{theorem}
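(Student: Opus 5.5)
The plan is to specialize the gradient formula of \Cref{lemma.gradient} to $K=K^{\star}$, use the block structure from \Cref{lemma.positive_P} to reduce $\nabla_L J(K^{\star},L)$ to a one-line expression, and then read off the stationarity condition.

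\textbf{Reducing the gradient.} By \Cref{lemma.positive_P}, $S_{12}=0$ at $K^{\star}$. Hence $\hat{F}S_{K^{\star},L}=\begin{bmatrix}0 & S_{22}\end{bmatrix}$. Multiplying by
\[
\hat{\mathcal{A}}_{K^{\star},L}=\begin{bmatrix}A-BK^{\star} & BK^{\star}\\ 0 & A-LC\end{bmatrix}
\]
gives $\begin{bmatrix}0 & S_{22}(A-LC)\end{bmatrix}$. Multiplying by $\Omega_{K^{\star},L}$ and then by $\hat{C}^{\tr}=\begin{bmatrix}0\\ -C^{\tr}\end{bmatrix}$ picks out only the $\Omega_{22}$ block. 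This yields
\[
\nabla_L J(K^{\star},L)=-2\,S_{22}\,(A-LC)\,\Omega_{22}\,C^{\tr}.
\]
Here $\Omega_{22}=\hat{\Omega}_L$ by \eqref{eq.block_lyapunov_Sigma22} and \eqref{eq.accumulated_estimation_lyapu}.

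\textbf{$L^{\star}$ is stationary.} By \Cref{proposition.optimal_observer_1}, $L^{\star}\in\mathbb{L}$ and $L^{\star}C\hat{\Omega}_{L^\star}C^{\tr}=A\hat{\Omega}_{L^\star}C^{\tr}$. Therefore $(A-L^{\star}C)\hat{\Omega}_{L^\star}C^{\tr}=0$, so the gradient vanishes at $L^{\star}$ regardless of $S_{22}$. This proves the first claim.

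\textbf{Uniqueness among observable stationary points.} Take any $L\in\mathbb{L}\cap\mathbb{L}_o$ with $\nabla_LJ(K^{\star},L)=0$.
\begin{itemize}
\item By \Cref{lemma.positive_P}, $S_{K^{\star},L}\succ0$, so its diagonal block $S_{22}\succ0$ is invertible. Stationarity then gives $(A-LC)\Omega_{22}C^{\tr}=0$.
\item By \Cref{assumption.independent}, $E_0=Y_{22}\succ0$. By \Cref{lemma.Lyapunov_stability}, $\Omega_{22}\succ0$. Since $C$ has full row rank, $C\Omega_{22}C^{\tr}$ is invertible, and we can solve $L=A\Omega_{22}C^{\tr}(C\Omega_{22}C^{\tr})^{-1}$.
\item Substituting this $L$ into the Lyapunov equation \eqref{eq.accumulated_estimation_lyapu} and expanding $(A-LC)\Omega_{22}(A-LC)^{\tr}$, the cross terms collapse. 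We obtain exactly the Riccati equation \eqref{eq.sigma_riccati} for $\Omega_{22}$.
\item By the uniqueness of its positive definite solution in \Cref{proposition.optimal_observer_1}, $\Omega_{22}=\hat{\Omega}_{L^\star}$, hence $L=L^{\star}$.
\end{itemize}
So if $L^{\star}\in\mathbb{L}_o$, it is the unique observable stationary point. If $L^{\star}\notin\mathbb{L}_o$, any observable stationary point would have to equal $L^{\star}$, which is a contradiction; hence none exists.

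\textbf{Main obstacle.} The delicate step is the uniqueness argument. It depends on three facts: that observability really forces $S_{22}\succ0$ (via \Cref{lemma.positive_P}, so that $S_{22}$ can be cancelled); that $C\Omega_{22}C^{\tr}$ is invertible; and that the uniqueness in \Cref{proposition.optimal_observer_1} covers the case at hand. On the last point, the Riccati solution produced here comes from a stabilizing $L$ and is positive definite, and I would rely on the proposition's uniqueness of the positive definite solution to identify it with $\hat{\Omega}_{L^\star}$.
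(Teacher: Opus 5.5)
Your proposal is correct and follows essentially the same route as the paper. You use $S_{12}=0$ at $K^{\star}$ to reduce the gradient to $-2S_{22}(A-LC)\Omega_{22}C^{\tr}$ and observe that $(A-L^{\star}C)\hat{\Omega}_{L^\star}C^{\tr}=0$; for uniqueness you cancel $S_{22}\succ0$ and invert $C\Omega_{22}C^{\tr}\succ0$, so that stationarity plus \eqref{eq.accumulated_estimation_lyapu} becomes the Riccati equation \eqref{eq.sigma_riccati}, whose unique positive definite solution forces $L=L^{\star}$. You are more explicit than the paper about the Riccati substitution and the ``otherwise'' clause. Like the paper, you invoke \Cref{lemma.positive_P} for $S_{22}\succ0$ without checking its extra hypothesis $\rho(\mathcal{A}_{K^{\star},L})<1$. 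This is harmless: observability of $(Q^{\frac{1}{2}},A)$ and of $(K^{\star},\mathcal{A}_{K^{\star},L})$ already makes the closed loop observable, hence $S_{K^{\star},L}\succ0$.
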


\begin{proof}
Applying the partitions defined in~\eqref{eq:Pk-partition} and~\eqref{eq:Sigma-X-partition}, the policy gradient $\nabla_L J(K,L)$ in~\eqref{eq.gradient} can be expanded as
\begin{equation}
\label{eq.gradient_L1}
\begin{aligned}
\nabla_{L} J(K,L) &= - 2 S_{12}^{\tr} (A - B K) \Omega_{12} C^{\tr} - 2 S_{12}^{\tr} B K \Omega_{22} C^{\tr} \\
&\quad \ - 2 S_{22} (A - L C) \Omega_{22} C^{\tr}.
\end{aligned}
\end{equation}
Substituting the standard LQR controller $K^{\star}$ derives that 
\begin{equation}
\label{eq.gradient_L1_K*}
\begin{aligned}
\nabla_{L} J(K^{\star},L) &= - 2 S_{22}^{\star} (A - L C) \Omega_{22} C^{\tr}.
\end{aligned}
\end{equation}

Note that $E_0 = Y_{22}$, and the Lyapunov equations~\eqref{eq.block_lyapunov_Sigma22} and~\eqref{eq.accumulated_estimation_lyapu} have the same expression. By \Cref{lemma.Lyapunov_stability}(b), we have $\Omega_{22} = \hat{\Omega}_L\in \mathbb{S}_{++}^n$ since $E_0 \in \mathbb{S}_{++}^n$. It directly follows that 
\begin{equation}
\nonumber
\resizebox{1.0\hsize}{!}{$
(A - L^\star C) \hat{\Omega}_{L\star} C^{\tr} = (A - A \hat{\Omega}_{L^\star} C^{\tr} (C \hat{\Omega}_{L^\star} C^{\tr})^{-1} C) \hat{\Omega}_{L\star} C^{\tr} = 0. 
$}
\end{equation}
This completes the proof of the first claim.

By \Cref{lemma.positive_P}, $S_{K^{\star}, L}\in \mathbb{S}_{++}^{2n}$ for any $L \in \mathbb{L}_o$, which directly leads to $S_{22}^{\star}\in \mathbb{S}_{++}^n$. Since $\Omega_{22} \in \mathbb{S}_{++}^n$ and $C$ has rows of full rank, we have $C \Omega_{22} C^{\tr}\in \mathbb{S}_{++}^n$. Therefore, by combining~\eqref{eq.accumulated_estimation_lyapu} with~\eqref{eq.sigma_riccati}, $L^{\star}$ is the unique observable stationary point. 
\end{proof}

The key to the proof is that when the controller adopts the standard LQR controller $K^{\star}$, the block matrix $S_{12}^{\star} = 0$,~\emph{i.e.}, the objective function~\eqref{eq.cost_in_P} is independent of the cross term between the state $x_t$ and the state estimation error $x_t - \xi_t$. In contrast, the accumulated state correlation matrix $\Omega_{K,L}$ lacks such good properties, rendering it challenging to derive that the derivative of the cost function about the controller gain vanishes, as shown in \Cref{proposition.stationary_failed}. 


\begin{proposition}
\label{proposition.stationary_failed}
When $K^{\star} (\Omega_{22}-\Omega_{12}^{\tr}) \neq 0$ and the observation error correlation $Y_{22}$ is not equal to the cross-correlation $Y_{12}^{\tr}$ with system state, $K^{\star}$ defined in~\eqref{eq.standard_K} and $L^{\star}$ defined in~\eqref{eq.standard_L} cannot constitute the stationary point of \Cref{problem.OD-LQR}. 
\end{proposition}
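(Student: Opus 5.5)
The plan is to compute $\nabla_K J(K,L)$ at $(K^\star,L^\star)$ and show that it is nonzero under the stated hypotheses. Since \Cref{theorem.stationary_point} already gives $\nabla_L J(K^\star,L^\star)=0$, a stationary point requires exactly the vanishing of the $K$-gradient.

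\textbf{Step 1: block expansion of the $K$-gradient.} Using the partitions \eqref{eq:Pk-partition} and \eqref{eq:Sigma-X-partition}, I would expand
\[
\nabla_K J = 2\bigl(RK\bar F - \hat B^{\tr} S_{K,L}\hat{\mathcal A}_{K,L}\bigr)\Omega_{K,L}\bar F^{\tr}.
\]
The row factor is
\[
RK\bar F - \hat B^{\tr}S\hat{\mathcal A} = \begin{bmatrix} RK - B^{\tr}S_{11}(A-BK) & \;-RK - B^{\tr}S_{11}BK - B^{\tr}S_{12}(A-LC)\end{bmatrix}.
\]
The column factor is
\[
\Omega\bar F^{\tr} = \begin{bmatrix}\Omega_{11}-\Omega_{12}\\ \Omega_{12}^{\tr}-\Omega_{22}\end{bmatrix}.
\]

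\textbf{Step 2: substitute $K^\star$.} By \Cref{lemma.positive_P}, at $K=K^\star$ we have $S_{11}=\hat S^\star$ and $S_{12}=0$. The identity $(R+B^{\tr}\hat S^\star B)K^\star = B^{\tr}\hat S^\star A$ from \eqref{eq.standard_K} gives $RK^\star - B^{\tr}\hat S^\star(A-BK^\star)=0$. So the first block vanishes, and the second block becomes $-(R+B^{\tr}\hat S^\star B)K^\star$. Hence
\[
\nabla_K J(K^\star,L^\star) = 2(R+B^{\tr}\hat S^\star B)K^\star(\Omega_{22}-\Omega_{12}^{\tr}).
\]

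\textbf{Step 3: nonvanishing.} The matrix $R+B^{\tr}\hat S^\star B\succ 0$ is invertible. Therefore the gradient vanishes if and only if $K^\star(\Omega_{22}-\Omega_{12}^{\tr})=0$. Under the first hypothesis this fails, so $(K^\star,L^\star)$ is not stationary.

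\textbf{Step 4: role of $Y_{22}\neq Y_{12}^{\tr}$.} This hypothesis explains why $\Omega_{22}-\Omega_{12}^{\tr}$ is generically nonzero. From \eqref{eq.block_lyapunov_Sigma12} and \eqref{eq.block_lyapunov_Sigma22}, subtracting the transpose of the first from the second, the difference $D:=\Omega_{22}-\Omega_{12}^{\tr}$ satisfies
\[
D = (Y_{22}-Y_{12}^{\tr}) + (A-LC)\,D\,(A-BK)^{\tr}.
\]
This is a Sylvester equation with a unique solution, since $\rho(A-LC)\rho(A-BK)<1$. Its solution is zero if and only if $Y_{22}=Y_{12}^{\tr}$. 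So this hypothesis guarantees $D\neq 0$. The remaining requirement $K^\star D\neq 0$ is the stated assumption.

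\textbf{Main obstacle.} I expect the most care to be needed in Step 1, in getting the block signs right, in particular the structure of $\bar F=[I\; -I]$ and $\hat{\mathcal A}$, so that the $K^\star$ Riccati identity cancels exactly the first block.
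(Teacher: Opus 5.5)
Your proposal is correct and follows essentially the same route as the paper. Both expand $\nabla_K J$ blockwise, use $S_{11}=\hat{S}^{\star}$, $S_{12}=0$ and $(R+B^{\tr}\hat{S}^{\star}B)K^{\star}=B^{\tr}\hat{S}^{\star}A$ to get $\nabla_K J(K^{\star},L^{\star})=2(R+B^{\tr}\hat{S}^{\star}B)K^{\star}(\Omega_{22}-\Omega_{12}^{\tr})$, and finish by invertibility (your row-times-column form is just the unexpanded version of the paper's $\Sigma_{22}$ expression); one small correction: in Step 4, for general $L$ the subtraction leaves an extra term $-(A-LC)\Omega_{22}C^{\tr}L^{\tr}$, so your recursion for $D$ holds only at $L=L^{\star}$, where $(A-L^{\star}C)\Omega_{22}C^{\tr}=0$ (this is how the paper obtains it), though Step 3 already closes the argument without it.
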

\begin{proof}
We will show that if $\nabla_{K}J(K,L^{\star})|_{K=K^{\star}} = 0 $, then $K^{\star} (\Omega_{22}-\Omega_{12}^{\tr}) = 0$ or $Y_{22} - Y_{12}^{\tr} = 0$. Similar to~\eqref{eq.gradient_L1}, the policy gradient $\nabla_K J(K,L)$ in~\eqref{eq.gradient} can be expanded as
\begin{equation}
\label{eq.gradient_K1}
\begin{aligned}
\nabla_{K}J(K,L) &= 2 (R + B^{\tr} S_{11} B) K \Sigma_{22} \\
&\quad \ - 2B^{\tr} S_{11} A (\Omega_{11} - \Omega_{12}) \\
&\quad \ - 2B^{\tr} S_{12} (A - L C) (\Omega_{12}^{\tr} - \Omega_{22}), 
\end{aligned}
\end{equation}
where $\Sigma_{22} := \mathbb{E}_{\bar{z}_0\sim \mathcal{B}} [\sum_{t=0}^{\infty}\xi_t \xi_t^{\tr}] = \Omega_{11} - \Omega_{12} -\Omega_{12}^{\tr} + \Omega_{22}$ denotes the accumulated correlation of the internal state. 

Note that we have $K^{\star} = (R + B^{\tr} {\hat{S}^{\star}} B)^{-1} B^{\tr} {\hat{S}^{\star}} A$ and $S_{11}^{\star} = \hat{S}^{\star}$. Substituting~\eqref{eq.standard_K} and~\eqref{eq.standard_L} into~\eqref{eq.gradient_K1} yields 
\begin{equation}
\nonumber
\begin{aligned}
\nabla_{K}J(K,L^{\star})|_{K=K^{\star}} = 2 (R + B^{\tr} {\hat{S}^{\star}} B) K^{\star} (\Omega_{22}-\Omega_{12}^{\tr}).
\end{aligned}
\end{equation}
Subtracting the transpose of~\eqref{eq.block_lyapunov_Sigma12} from~\eqref{eq.block_lyapunov_Sigma22}, and applying the standard LQR controller~\eqref{eq.standard_K} and observer~\eqref{eq.standard_L}, one has
\begin{equation}
\resizebox{1.0\hsize}{!}{$
\begin{aligned}
\Omega_{22}-\Omega_{12}^{\tr} &= Y_{22} - Y_{12}^{\tr} + \left(A - L^{\star} C\right) \Omega_{22} \left(A - B K^{\star} - L^{\star} C\right)^{\tr} \\ 
&\quad \ - \left(A - L^{\star} C\right) \Omega_{12}^{\tr} \left(A - B K^{\star}\right)^{\tr} \\
&= Y_{22} - Y_{12}^{\tr} + (A - L^{\star} C) (\Omega_{22} - \Omega_{12}^{\tr}) (A - B K^{\star})^{\tr}. 
\end{aligned}
$}
\end{equation}
Since $R + B^{\tr} {\hat{S}^{\star}} B \succ 0$, if $\nabla_{K}J(K,L^{\star})|_{K=K^{\star}} = 0$, then $K^{\star} (\Omega_{22}-\Omega_{12}^{\tr}) = 0$ or $Y_{22} - Y_{12}^{\tr} = 0$, which completes the proof. It is also clear that $K^{\star}$ is generally suboptimal in terms of achieving the minimal cost under the standard observer $L^{\star}$ for general initial state correlation matrix $Y$. 
\end{proof}

The above proposition excludes the possibility of the standard LQR controller $K^{\star}$ and observer $L^{\star}$ forming the stationary point of \Cref{problem.OD-LQR} unless $K^{\star} (\Omega_{22}-\Omega_{12}^{\tr}) = 0$ or 
\begin{equation}
\nonumber
\begin{aligned}
Y_{22} - Y_{12}^{\tr} &= \mathbb{E}_{\bar{z}_0 \sim \mathcal{B}} \left[(x_0 - \xi_0)(x_0 - \xi_0)^{\tr} - (x_0 - \xi_0)x_0^{\tr}\right] \\
&= - \mathbb{E}_{\bar{z}_0 \sim \mathcal{B}} [(x_0 - \xi_0)\xi_0^{\tr}] = 0, 
\end{aligned}
\end{equation}
that is, the initial cross-correlation between observation error $x_0 - \xi_0$ and state observation $\xi_0$ exactly vanishes. 
Subsequent numerical experiments in Section~\ref{sec.experiment} will present that the stationary point obtained through numerical methods are generally independent of the standard LQR controller $K^{\star}$ and the standard observer $L^{\star}$. The following theorem will combine the policy gradient expressions in \Cref{lemma.gradient} and derive the conditions that the stationary point satisfies. 

\begin{theorem}
\label{theorem.sufficient_condition}
Suppose that $B$ has columns of full rank. The stationary point of $J(K^{\ddagger}, L^{\ddagger})$ in \Cref{problem.OD-LQR} satisfies the following discrete-time Sylvester equations: 
\begin{subequations}
\begin{align}
    \label{eq.sylvester_equation_k}
    K^{\ddagger} &= G_{K} + R_{K}^{-1} M K^{\ddagger} N \Sigma_{22}^{-1}, \\
    \label{eq.sylvester_equation_l}
    L^{\ddagger} &= G_{L} + S_{22}^{-1} U L^{\ddagger} V (C \Omega_{22} C^{\tr})^{-1}, 
\end{align}
\end{subequations}
where 
\begin{equation}
\nonumber
\begin{aligned}
    R_{K} &= R + B^{\tr} S_{11} B, \\
    M &= B^{\tr} S_{12} S_{22}^{-1} S_{12}^{\tr} B, \\
    N &= (\Omega_{12} - \Omega_{22}) C^{\tr} (C \Omega_{22} C^{\tr})^{-1} C (\Omega_{12}^{\tr} - \Omega_{22}), \\
    G_{K} &= K^{\circ} - R_{K}^{-1} B^{\tr} S_{12} L^{\circ}C (\Omega_{12}^{\tr} - \Omega_{22}) \Sigma_{22}^{-1}, \\
    U &= S_{12}^{\tr} B R_{K}^{-1} B^{\tr} S_{12}, \\
    V &= C (\Omega_{12}^{\tr} - \Omega_{22}) \Sigma_{22}^{-1} (\Omega_{12} - \Omega_{22}) C^{\tr}, \\
    G_{L} &= L^{\circ} - S_{22}^{-1} S_{12}^{\tr} BK^{\circ} (\Omega_{12} - \Omega_{22}) C^{\tr} (C \Omega_{22} C^{\tr})^{-1}, \\
    K^{\circ} &= R_{K}^{-1} B^{\tr} S_{11} A + R_{K}^{-1} B^{\tr} (S_{11} + S_{12}) A (\Omega_{12}^{\tr} - \Omega_{22}) \Sigma_{22}^{-1}, \\
    L^{\circ} &= A \Omega_{22} C^{\tr} (C \Omega_{22} C^{\tr})^{-1} + S_{22}^{-1} S_{12}^{\tr} A \Omega_{12} C^{\tr} (C \Omega_{22} C^{\tr})^{-1}. 
\end{aligned}
\end{equation}
\end{theorem}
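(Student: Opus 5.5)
We set both gradients from \Cref{lemma.gradient} to zero, in their block-expanded forms, and solve the resulting pair of linear conditions for $K^{\ddagger}$ and $L^{\ddagger}$. Each condition will contain the other gain linearly. Substituting one into the other then produces the two Sylvester equations.

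\textbf{Expanding the $K$-gradient.} We use the expansion \eqref{eq.gradient_K1} with $K=K^{\ddagger}$, $L=L^{\ddagger}$. We write $A-LC = A - LC$ inside the last term, so that the condition $\nabla_K J=0$ reads
\begin{equation}
\nonumber
R_K K \Sigma_{22} = B^{\tr} S_{11} A(\Omega_{11}-\Omega_{12}) + B^{\tr} S_{12} A(\Omega_{12}^{\tr}-\Omega_{22}) - B^{\tr} S_{12} L C(\Omega_{12}^{\tr}-\Omega_{22}).
\end{equation}
Using $\Sigma_{22}=(\Omega_{11}-\Omega_{12})-(\Omega_{12}^{\tr}-\Omega_{22})$, we rewrite $\Omega_{11}-\Omega_{12}=\Sigma_{22}+(\Omega_{12}^{\tr}-\Omega_{22})$. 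This makes the first term produce $R_K^{-1}B^{\tr}S_{11}A$ after right-multiplying by $\Sigma_{22}^{-1}$. The remaining $A$-terms combine into $R_K^{-1}B^{\tr}(S_{11}+S_{12})A(\Omega_{12}^{\tr}-\Omega_{22})\Sigma_{22}^{-1}$, which gives exactly $K^{\circ}$.

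This step needs $\Sigma_{22}\succ0$. We justify it from $Y\succ0$: $\Omega\succeq Y\succ0$, and $\Sigma_{22}=\bar F\,\Omega\,\bar F^{\tr}$ with $\bar F$ of full row rank. We also use $R_K\succ0$. Hence
\begin{equation}
\nonumber
K = K^{\circ} - R_K^{-1}B^{\tr}S_{12}LC(\Omega_{12}^{\tr}-\Omega_{22})\Sigma_{22}^{-1}.
\end{equation}

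\textbf{Expanding the $L$-gradient.} We use \eqref{eq.gradient_L1}, again splitting $A-BK$ and $A-LC$. The condition $\nabla_L J=0$ becomes
\begin{equation}
\nonumber
S_{22}LC\Omega_{22}C^{\tr} = S_{22}A\Omega_{22}C^{\tr} + S_{12}^{\tr}A\Omega_{12}C^{\tr} + S_{12}^{\tr}BK(\Omega_{22}-\Omega_{12})C^{\tr}.
\end{equation}
We need $S_{22}$ invertible, and we will argue $S_{22}\succ0$. One route: for a stationary point with the gains stabilizing, $S_{K,L}\succ0$ follows from \Cref{lemma.Lyapunov_stability} under observability, as in \Cref{lemma.positive_P}; we assume this, as the theorem implicitly does by writing $S_{22}^{-1}$. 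We also need $C\Omega_{22}C^{\tr}\succ0$, which holds as in \Cref{theorem.stationary_point}. Solving gives
\begin{equation}
\nonumber
L = L^{\circ} - S_{22}^{-1}S_{12}^{\tr}BK(\Omega_{12}-\Omega_{22})C^{\tr}(C\Omega_{22}C^{\tr})^{-1}.
\end{equation}

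\textbf{Substituting to obtain the Sylvester equations.} We plug the $L$-expression into the $K$-expression.
\begin{itemize}
\item[] The $L^{\circ}$ part yields the second term of $G_K$.
\item[] The $K$-dependent part yields $R_K^{-1}B^{\tr}S_{12}S_{22}^{-1}S_{12}^{\tr}B\,K\,(\Omega_{12}-\Omega_{22})C^{\tr}(C\Omega_{22}C^{\tr})^{-1}C(\Omega_{12}^{\tr}-\Omega_{22})\Sigma_{22}^{-1}$. The two sign flips cancel, and this is $R_K^{-1}MKN\Sigma_{22}^{-1}$, which is \eqref{eq.sylvester_equation_k}.
\end{itemize}
Symmetrically, we plug the $K$-expression into the $L$-expression. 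The $K^{\circ}$ part yields $G_L$, and the cross term yields $S_{22}^{-1}S_{12}^{\tr}BR_K^{-1}B^{\tr}S_{12}\,L\,C(\Omega_{12}^{\tr}-\Omega_{22})\Sigma_{22}^{-1}(\Omega_{12}-\Omega_{22})C^{\tr}(C\Omega_{22}C^{\tr})^{-1}$. This is $S_{22}^{-1}ULV(C\Omega_{22}C^{\tr})^{-1}$, giving \eqref{eq.sylvester_equation_l}.

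The full column rank of $B$ is what we would invoke to make the solution for $K$ from $R_KK\Sigma_{22}=\cdots$ meaningful and unique given the blocks. Strictly, $R_K\succ0$ already suffices.

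\textbf{Main obstacle.} The hardest part is bookkeeping rather than conceptual. We must verify the expansion \eqref{eq.gradient_K1}: the terms $S_{12}^{\tr}$ versus $S_{12}$ and $\Omega_{12}$ versus $\Omega_{12}^{\tr}$ from $\hat{\mathcal A}_{K,L}$ and $\bar F^{\tr}$ must be tracked so that the $A$-terms collapse into $(S_{11}+S_{12})A(\Omega_{12}^{\tr}-\Omega_{22})$. We must also track the signs in the cross-substitution. We would first recompute $\hat B^{\tr}S\hat{\mathcal A}\Omega\bar F^{\tr}$ blockwise directly and check it against \eqref{eq.gradient_K1}. The secondary issue is the invertibility of $S_{22}$, which we take from the observability argument.
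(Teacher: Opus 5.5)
Your proposal is correct and follows the paper's proof essentially step for step. You set the block-expanded gradients \eqref{eq.gradient_K1} and \eqref{eq.gradient_L1} to zero, solve each for one gain in terms of the other (producing $K^{\circ}$ and $L^{\circ}$), and cross-substitute to obtain the two Sylvester equations. Your extra justifications fill in points the paper leaves implicit: $\Sigma_{22}=\bar{F}\Omega_{K,L}\bar{F}^{\tr}\succ 0$ and $C\Omega_{22}C^{\tr}\succ 0$ follow from $\Omega_{K,L}\succeq Y\succ 0$; $S_{22}\succ 0$ must be assumed, and it holds precisely when $(K^{\ddagger},\mathcal{A}_{K^{\ddagger},L^{\ddagger}})$ is observable; and the full-column-rank hypothesis on $B$ is never actually used.
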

\begin{proof}
Letting the gradients~\eqref{eq.gradient_L1} and~\eqref{eq.gradient_K1} vanish, we have 
\begin{subequations}
\begin{align}
    \nonumber
    K^{\ddagger} &= R_{K}^{-1} B^{\tr} S_{11} A + R_{K}^{-1} B^{\tr} (S_{11} + S_{12}) A (\Omega_{12}^{\tr} - \Omega_{22}) \Sigma_{22}^{-1} \\
    \nonumber
    &\quad \ - R_{K}^{-1} B^{\tr} S_{12} L^{\ddagger} C (\Omega_{12}^{\tr} - \Omega_{22}) \Sigma_{22}^{-1} \\
    &= K^{\circ} - R_{K}^{-1} B^{\tr} S_{12} L^{\ddagger} C (\Omega_{12}^{\tr} - \Omega_{22}) \Sigma_{22}^{-1}, \\
    \nonumber
    L^{\ddagger} &= A \Omega_{22} C^{\tr} (C \Omega_{22} C^{\tr})^{-1} + S_{22}^{-1} S_{12}^{\tr} A \Omega_{12} C^{\tr} (C \Omega_{22} C^{\tr})^{-1} \\
    \nonumber
    &\quad \ - S_{22}^{-1} S_{12}^{\tr} B K^{\ddagger} (\Omega_{12} - \Omega_{22}) C^{\tr} (C \Omega_{22} C^{\tr})^{-1} \\
    &= L^{\circ} \!-\! S_{22}^{-1} S_{12}^{\tr} B K^{\ddagger} (\Omega_{12} \!-\! \Omega_{22}) C^{\tr} (C \Omega_{22} C^{\tr})^{-1}. 
\end{align}
\end{subequations}
Combining the above equations yields 
\begin{equation}
\nonumber
\begin{aligned}
    K^{\ddagger} &= K^{\circ} - R_{K}^{-1} B^{\tr} S_{12} L^{\circ}C (\Omega_{12}^{\tr} - \Omega_{22}) \Sigma_{22}^{-1} \\
    &\quad \ + R_{K}^{-1} M K^{\ddagger} N \Sigma_{22}^{-1}, \\
    L^{\ddagger} &= L^{\circ} - S_{22}^{-1} S_{12}^{\tr} BK^{\circ} (\Omega_{12} - \Omega_{22}) C^{\tr} (C \Omega_{22} C^{\tr})^{-1} \\
    &\quad \ + S_{22}^{-1} U L^{\ddagger} V (C \Omega_{22} C^{\tr})^{-1}, 
\end{aligned}
\end{equation}
which completes the proof. 
\end{proof}

The derived discrete-time Sylvester equations can be solved by utilizing the \texttt{dlyap} instruction in MATLAB. 
The Sylvester equation~\eqref{eq.sylvester_equation_k} about the controller $K^{\ddagger}$ has a unique solution if and only if $\lambda_{i}\mu_{j} \neq 1$ for all $i = 1, \cdots, m$ and $j = 1, \cdots, n$, where $\lambda_{1}, \cdots, \lambda_{m}$ are the eigenvalues of $R_{K}^{-1} M$, and $\mu_{1}, \cdots, \mu_{n}$ are the eigenvalues of $N \Sigma_{22}^{-1}$, and the same applies to the Sylvester equation~\eqref{eq.sylvester_equation_l} about the observer $L^{\ddagger}$. 

Unlike the independent assignment of the poles of feedback controller and state observer, the controller $K^{\ddagger}$ and observer $L^{\ddagger}$ depend on each other and jointly affect the objective function. Although the derived Sylvester equations are coupled, they have good symmetry. 
By utilizing the influence of the standard controller-observer pair $(K^{\star}, L^{\star})$ on matrices $S_{12}$ and $\Omega_{12}^{\tr} - \Omega_{22}$, the following Proposition will further investigate the relationship between the stationary point $(K^{\ddagger}, L^{\ddagger})$ and the standard controller-observer pair $(K^{\star}, L^{\star})$. 

\begin{proposition}
\label{proposition.stationary_degeneration}
Given the standard observer $L^{\star}$ or the standard LQR controller $K^{\star}$ with the special initial state correlation satisfying $Y_{22} - Y_{12}^{\tr} = 0$, the derived stationary point $(K^{\ddagger}, L^{\ddagger})$ will degrade into the ordinary pair $(K^{\star}, L^{\star})$. 
\end{proposition}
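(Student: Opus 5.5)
The plan is to feed the two structural facts that $(K^\star,L^\star)$ produces into the stationarity system of \Cref{theorem.sufficient_condition}. These facts are $S_{12}=0$ from \Cref{lemma.positive_P}, and $\Omega_{12}^{\tr}-\Omega_{22}=0$ from the hypothesis $Y_{22}=Y_{12}^{\tr}$. Once they are in place, most coupling terms in the stationary conditions vanish. What remains are the classical optimality conditions defining $K^\star$ in \eqref{eq.standard_K} and $L^\star$ in \eqref{eq.standard_L}.

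\textbf{Step 1 (the cross-correlation block).} Assume $Y_{22}-Y_{12}^{\tr}=0$. Subtract the transpose of \eqref{eq.block_lyapunov_Sigma12} from \eqref{eq.block_lyapunov_Sigma22}, as in the proof of \Cref{proposition.stationary_failed}, but now keep general $(K,L)$. This gives
\[
D = (Y_{22}-Y_{12}^{\tr}) + (A-LC)\,D\,(A-BK)^{\tr}, \qquad D:=\Omega_{22}-\Omega_{12}^{\tr}.
\]
This is a Stein equation. Since $\rho(A-LC)<1$ and $\rho(A-BK)<1$, every eigenvalue product has modulus below one. The solution is therefore unique, so $D=0$, i.e. $\Omega_{12}^{\tr}=\Omega_{22}$, for \emph{every} stabilizing pair. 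In particular, $K^\circ$ reduces to $R_K^{-1}B^{\tr}S_{11}A$. Moreover $N=0$, $V=0$, and the correction in $G_K$ drops out. Hence \eqref{eq.sylvester_equation_k} collapses to $K^\ddagger=R_K^{-1}B^{\tr}S_{11}A$ with $S_{11}$ evaluated at $K^\ddagger$. Together with \eqref{eq.block_lyapunov_P11}, this is exactly the Riccati system \eqref{eq.P_riccati}--\eqref{eq.P_Kstar}. By uniqueness of the stabilizing solution, $S_{11}=\hat S^\star$ and $K^\ddagger=K^\star$.

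\textbf{Step 2 (the observer).} With $K^\ddagger=K^\star$, \Cref{lemma.positive_P} gives $S_{12}=0$. Then $U=0$, $M=0$, and $L^\circ=A\Omega_{22}C^{\tr}(C\Omega_{22}C^{\tr})^{-1}$. The correction term in $G_L$ carries a factor $S_{12}^{\tr}$ and also vanishes. So \eqref{eq.sylvester_equation_l} reads $L^\ddagger=A\Omega_{22}C^{\tr}(C\Omega_{22}C^{\tr})^{-1}$ with $\Omega_{22}=\hat\Omega_{L^\ddagger}$ solving \eqref{eq.accumulated_estimation_lyapu}. Substituting this gain into \eqref{eq.accumulated_estimation_lyapu} yields the Riccati equation \eqref{eq.sigma_riccati}. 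By \Cref{proposition.optimal_observer_1}, and uniqueness of its positive definite solution, $L^\ddagger=L^\star$. For invertibility of $S_{22}$, I would restrict to observable stationary points as in \Cref{theorem.stationary_point}. Alternatively, I would argue directly from the gradient \eqref{eq.gradient_L1_K*} that stationarity with $S_{22}^\star\succ0$ forces $(A-LC)\Omega_{22}C^{\tr}=0$, which is the same condition.

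\textbf{Reading of the "or".} I would handle the statement's two alternatives symmetrically. If $L^\star$ is given, Step 1 alone yields $K^\star$, and consistency follows from \Cref{theorem.stationary_point}. If $K^\star$ is given, Step 2 yields $L^\star$. Finally, \Cref{proposition.stationary_failed} with $Y_{22}=Y_{12}^{\tr}$ confirms that $\nabla_K J(K^\star,L^\star)=0$. Both gradients therefore vanish at $(K^\star,L^\star)$.

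The main obstacle is Step 1. The Sylvester characterization treats $S_{11}$ and $\Omega$ as evaluated at the unknown $(K^\ddagger,L^\ddagger)$, so I must justify that $D=0$ holds irrespective of the gains. The Stein-equation uniqueness argument gives this, but it needs both closed-loop matrices to be stable, which holds on $\mathbb{K}\times\mathbb{L}$. A second subtlety is the invertibility of $\Sigma_{22}$ and $S_{22}$ used in \Cref{theorem.sufficient_condition}. I would sidestep it by working directly with the vanishing gradients \eqref{eq.gradient_K1} and \eqref{eq.gradient_L1} rather than with the inverted forms.
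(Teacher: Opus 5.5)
Your approach matches the paper's (substitute $S_{12}=0$ and $\Omega_{12}^{\tr}=\Omega_{22}$ into the stationarity conditions, then invoke Riccati uniqueness). However, Step 1 rests on an identity that is false for general gains. Transposing \eqref{eq.block_lyapunov_Sigma12} produces the term $(A-LC)\Omega_{22}K^{\tr}B^{\tr}$. Subtracting it from the term $(A-LC)\Omega_{22}(A-LC)^{\tr}$ in \eqref{eq.block_lyapunov_Sigma22} leaves $(A-LC)\Omega_{22}(A-BK-LC)^{\tr}$, not $(A-LC)\Omega_{22}(A-BK)^{\tr}$. The correct recursion is
\[
D=(Y_{22}-Y_{12}^{\tr})+(A-LC)\,D\,(A-BK)^{\tr}-(A-LC)\,\Omega_{22}C^{\tr}L^{\tr}.
\]
In the proof of \Cref{proposition.stationary_failed} the last term was dropped only because $L=L^{\star}$ gives $(A-L^{\star}C)\hat{\Omega}_{L^{\star}}C^{\tr}=0$. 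Without that, Stein uniqueness tells you nothing. In the scalar case with $C=1$, the forcing term is $(a-l)\,l\,\Omega_{22}$, which is nonzero for every stabilizing $l\notin\{0,a\}$. So $D\neq 0$ even when $Y_{22}=Y_{12}^{\tr}$. Equivalently, $D=-\sum_t\mathbb{E}[e_t\xi_t^{\tr}]$ with $e_t=x_t-\xi_t$, and $\xi_{t+1}=(A-BK)\xi_t+LCe_t$ re-injects the error into the internal state, so orthogonality at $t=0$ is not propagated. Hence $D=0$ is not gain-independent. Your chain Step 1 $\Rightarrow$ Step 2 extracts $K^{\ddagger}=K^{\star}$ from $Y_{22}=Y_{12}^{\tr}$ alone, before anything is known about $L^{\ddagger}$, and is therefore unjustified. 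The ``main obstacle'' you flag is not resolved by the argument you give.

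The repair is to fix the order of inferences, which is what the paper does.
\begin{itemize}
\item In the first alternative, $L=L^{\star}$ is given. The forcing term then vanishes and $D=0$ for every stabilizing $K$. Your Step 1 computation goes through verbatim and gives $K^{\ddagger}=K^{\star}$ via Riccati uniqueness. You do not even need $S_{12}=0$ there, because the $S_{12}$ term in \eqref{eq.gradient_K1} multiplies $D$.
\item In the second alternative, run Step 2 first. It uses only $K^{\star}$ and $S_{12}=0$ from \Cref{lemma.positive_P}, not $D=0$, and yields $L^{\ddagger}=L^{\star}$. Only then invoke $Y_{22}=Y_{12}^{\tr}$ to get $D=0$ and confirm $K^{\ddagger}=K^{\star}$.
\end{itemize}
Your ``reading of the or'' paragraph already has this two-branch structure. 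With the corrected justification for $D=0$, it coincides with the paper's proof. The stronger claim that $D=0$ holds on all of $\mathbb{K}\times\mathbb{L}$ must be dropped.
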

\begin{proof}
Considering the special initial correlation $Y_{22} - Y_{12}^{\tr} = 0$, one has $\Omega_{22} - \Omega_{12}^{\tr} = 0$. Therefore, $N = 0$, $G_{K} = K^{\circ}$, $V = 0$, $G_{L} = L^{\circ}$, and the Sylvester equations can be simplified to 
\begin{equation}
\nonumber
\begin{aligned}
    K^{\ddagger} &= \left(R + B^{\tr} S_{11} B\right)^{-1} B^{\tr} S_{11} A, \\
    L^{\ddagger} &= A \Omega_{22} C^{\tr} (C \Omega_{22} C^{\tr})^{-1} + S_{22}^{-1} S_{12}^{\tr} A \Omega_{12} C^{\tr} (C \Omega_{22} C^{\tr})^{-1}. 
\end{aligned}
\end{equation}
Since $S_{11}$ only depends on $K^{\ddagger}$, $K^{\ddagger} = K^{\star}$. According to \Cref{lemma.positive_P}, $S_{12} = 0$, and $L^{\ddagger} = A \Omega_{22} C^{\tr} (C \Omega_{22} C^{\tr})^{-1}$. Since $\Omega_{22}$ only depends on $L^{\ddagger}$, we find that $L^{\ddagger} = L^{\star}$ holds exactly. 

Given the standard LQR controller $K^{\star}$, according to \Cref{lemma.positive_P}, $S_{12} = 0$. Thus, $M = 0$, $G_{K} = K^{\circ}$, $U = 0$, $G_{L} = L^{\circ}$, and the Sylvester equations can be simplified to 
\begin{equation}
\nonumber
\begin{aligned}
    K^{\ddagger} &= R_{K}^{-1} B^{\tr} S_{11} A + R_{K}^{-1} B^{\tr} (S_{11} + S_{12}) A (\Omega_{12}^{\tr} - \Omega_{22}) \Sigma_{22}^{-1}, \\
    L^{\ddagger} &= A \Omega_{22} C^{\tr} (C \Omega_{22} C^{\tr})^{-1}. 
\end{aligned}
\end{equation}
Since $\Omega_{22}$ only depends on $L^{\ddagger}$, $L^{\ddagger} = L^{\star}$. Similarly, considering the special initial correlation $Y_{22} - Y_{12}^{\tr} = 0$, one has $\Omega_{22} - \Omega_{12}^{\tr} = 0$ and $K^{\ddagger} = R_{K}^{-1} B^{\tr} S_{11} A$. Since $S_{11}$ only depends on $K^{\ddagger}$, we find that $K^{\ddagger} = K^{\star}$ holds exactly. 
\end{proof}

\begin{remark}
The condition $Y_{22} = Y_{12}^{\tr}$ implies $\Omega_{22} = \Omega_{12}^{\tr}$ for the derived stationary point, which reduces to the standard pair and can be designed separately. This statistical orthogonality eliminates the influence of initial correlation on the optimal gains and recovers the certainty equivalence property. 
\end{remark}

\section{Experiment}
\label{sec.experiment}
In this section, we will compare the costs and gradients of the standard LQR controller-observer pair and the stationary point of \texttt{OD-LQR} obtained through numerical methods. 

\subsection{Internally Unstable Linear System}
\label{sec.unstable_example}
Take the discrete version of the Doyle's LQG example~\cite{doyle1978guaranteed} 
\begin{equation}
\label{eq.unstable_system}
    A = \begin{bmatrix}
    1.1 & 0.1 \\
    0 & 1.1
    \end{bmatrix}, 
    B = \begin{bmatrix}
    0 \\
    0.1
    \end{bmatrix}, 
    C = \begin{bmatrix}
    1.0 & 1.0
    \end{bmatrix},
\end{equation}
which is derived from a physical double-integrator model under a specific coordinate transformation. The selected plant is internally unstable. Experiment results for stable systems are provided in the open-source code.\footnote{The code is available at \url{https://github.com/jieli18/od_lqr}} Let $Q = 0.25 I_2$, $R = 0.2$. The standard LQR controller $K^{\star} = \begin{bmatrix} 4.8768 & 4.3773 \end{bmatrix}$ can be directly obtained by solving the discrete-time algebraic Riccati equation~\eqref{eq.P_riccati}. Rather, the standard observer minimizing the accumulated estimation variance is determined by the initial estimation variance $E_0$. The following will explore the results of general and special initial state correlations separately: 
\begin{equation}
\nonumber
Y_g = \begin{bmatrix}
    2 & 0 & 0.1 & 0 \\
    0 & 2 & 0 & 0.1 \\
    0.1 & 0 & 1 & 0 \\
    0 & 0.1 & 0 & 1 \\
\end{bmatrix}, 
Y_s = \begin{bmatrix}
    2 & 0 & 1 & 0 \\
    0 & 2 & 0 & 1 \\
    1 & 0 & 1 & 0 \\
    0 & 1 & 0 & 1 \\
\end{bmatrix}. 
\end{equation}

\textit{1) General Initial State Correlation $Y_g$:} 
The standard observer $L^{\star} = \begin{bmatrix} -0.5667 & 1.8333 \end{bmatrix}^{\tr}$ can be solved through the algebraic Riccati equation~\eqref{eq.sigma_riccati}. The cost function of $L^{\star}$ under the standard LQR controller $K^{\star}$ is shown by the red dot in Fig.~\ref{subfig:perf_1_obs}, and the white space denotes unstable $\hat{\mathcal{A}}_{K,L}$, where the Lyapunov equations are unsolvable and the cost value is undefined. The unique minimum obtained by numerical search indicates that the standard observer $L^{\star}$ achieves optimality given the standard LQR controller $K^{\star}$. The norm of the gradient of cost function with respect to $L$ depicted in Fig.~\ref{subfig:grad_1_obs} shows that $L^{\star}$ is a stationary point of $J(K^{\star}, L)$. 

The landscape of the cost function of controller $K$ under the standard observer $L^{\star}$ is shown in Fig.~\ref{subfig:perf_1_ctr}, where the cost value of the minimum point is smaller than that of the standard LQR controller $K^{\star}$. Besides, the gradient of cost function for the minimum point vanishes, as shown in Fig.~\ref{subfig:grad_1_ctr}. By employing the \texttt{dlyap} instruction, the stationary point $K^{\ddagger} = \begin{bmatrix} 4.2598 & 3.9482 \end{bmatrix}$ and $L^{\ddagger} = \begin{bmatrix} -2.5604 & 4.0196 \end{bmatrix}^{\tr}$ derived in \Cref{theorem.sufficient_condition}, whose cost value is 102.2875, can be obtained numerically. Therefore, the experimental results indicate that the standard controller-observer pair $(K^{\star}, L^{\star})$ is generally different from the stationary point of the \texttt{OD-LQR} problem, and the standard LQR controller $K^{\star}$ is generally suboptimal when paired with the standard observer $L^{\star}$. 

\begin{figure}[!tb]
    \centering

    \subfigure[Cost function under $K^{\star}$]{
        \label{subfig:perf_1_obs}
        \begin{minipage}[t]{0.45\linewidth}
        \includegraphics[width=0.99\textwidth,trim=0 0 0 12,clip]{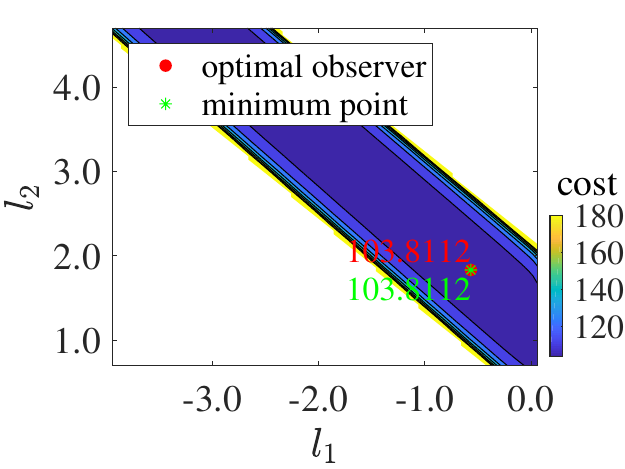}  
        \end{minipage}
    }
    \subfigure[Norm of grad w.r.t $L$ under $K^{\star}$]{
        \label{subfig:grad_1_obs}
        \begin{minipage}[t]{0.45\linewidth}
        \includegraphics[width=0.99\textwidth,trim=0 0 0 12,clip]{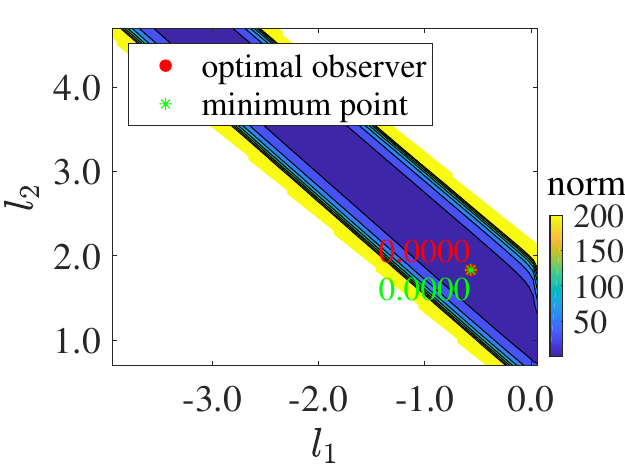}
        \end{minipage}
    }
    \subfigure[Cost function under $L^{\star}$]{
        \label{subfig:perf_1_ctr}
        \begin{minipage}[t]{0.45\linewidth}
        \includegraphics[width=0.99\textwidth,trim=0 0 0 12,clip]{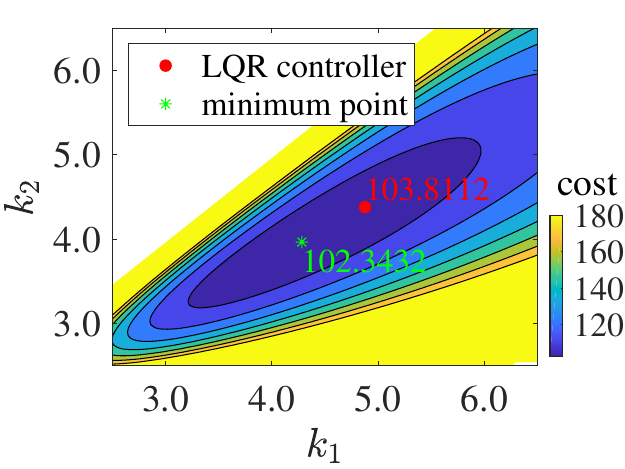}
        \end{minipage}
    }
    \subfigure[Norm of grad w.r.t $K$ under $L^{\star}$]{
        \label{subfig:grad_1_ctr}
        \begin{minipage}[t]{0.45\linewidth}
        \includegraphics[width=0.99\textwidth,trim=0 0 0 12,clip]{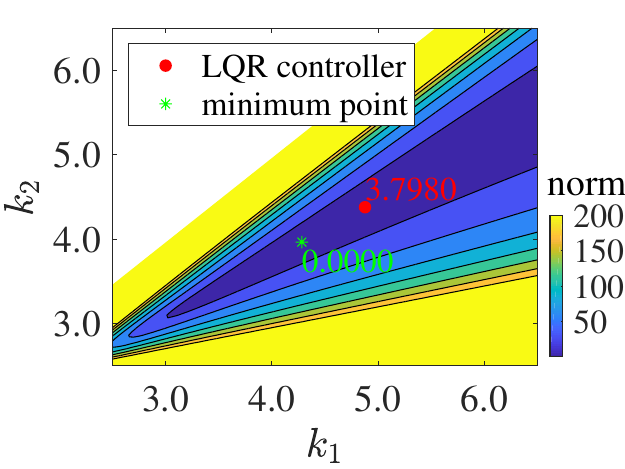}
        \end{minipage}
    }
    \vspace{-2mm}
    \caption{Results of general initial state correlation}
    \label{fig:perf_1}
    \vspace{-1mm}
\end{figure}

\textit{2) Special Initial State Correlation $Y_s$:} 
Since $Y_{22}$ is the same as the previous case, the standard observer $L^{\star}$ is also the same. The cost values of various observers under $K^{\star}$ are depicted in Fig.~\ref{subfig:perf_2_obs}. It can be found that the standard observer $L^{\star}$ achieves the minimum cost value when paired with the standard LQR controller $K^{\star}$. Similarly, the gradient norm of the cost function for diverse observers shown in Fig.~\ref{subfig:grad_2_obs} indicates that $L^{\star}$ is a stationary point of $J(K^{\star}, L)$. 

The cost function of different controllers operating with the standard observer $L^{\star}$ is illustrated in Fig.~\ref{subfig:perf_2_ctr}. Different from the previous case, the minimal cost value identified through numerical grid search coincides exactly with the cost value of the standard LQR controller $K^{\star}$. Besides, as shown in Fig.~\ref{subfig:grad_2_ctr}, the gradient of cost function for the standard LQR controller $K^{\star}$ vanishes. Therefore, for the special initial correlation $Y_{22} - Y_{12}^{\tr} = 0$, the standard controller-observer pair $(K^{\star}, L^{\star})$ is a stationary point, which shows control optimality. 

\begin{figure}[!tb]
    \centering

    \subfigure[Cost function under $K^{\star}$]{
        \label{subfig:perf_2_obs}
        \begin{minipage}[t]{0.45\linewidth}
        \includegraphics[width=0.99\textwidth,trim=0 0 0 12,clip]{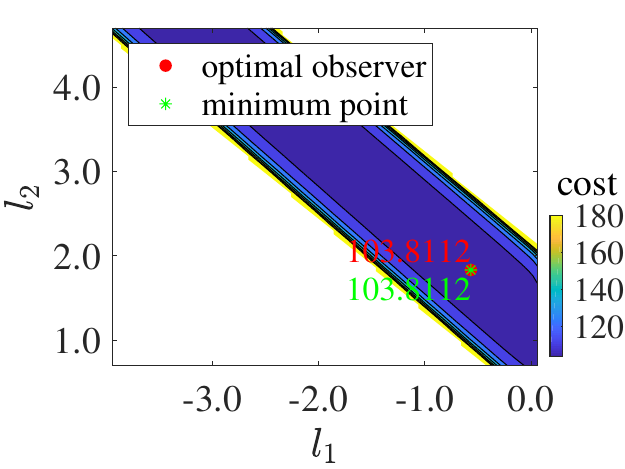}
        \end{minipage}
    }
    \subfigure[Norm of grad w.r.t $L$ under $K^{\star}$]{
        \label{subfig:grad_2_obs}
        \begin{minipage}[t]{0.45\linewidth}
        \includegraphics[width=0.99\textwidth,trim=0 0 0 12,clip]{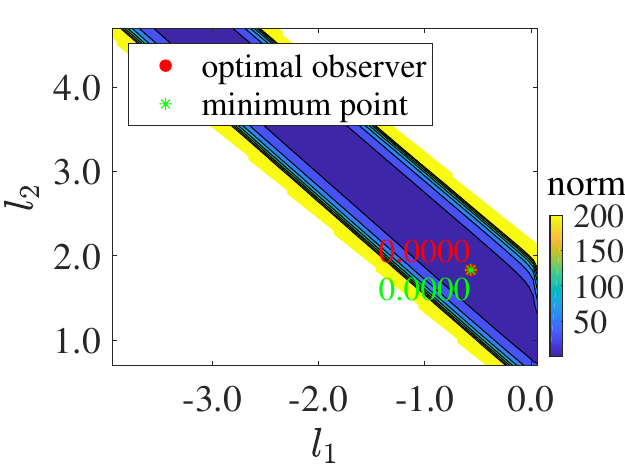}
        \end{minipage}
    }
    \qquad
    \subfigure[Cost function under $L^{\star}$]{
        \label{subfig:perf_2_ctr}
        \begin{minipage}[t]{0.45\linewidth}
        \includegraphics[width=0.99\textwidth,trim=0 0 0 12,clip]{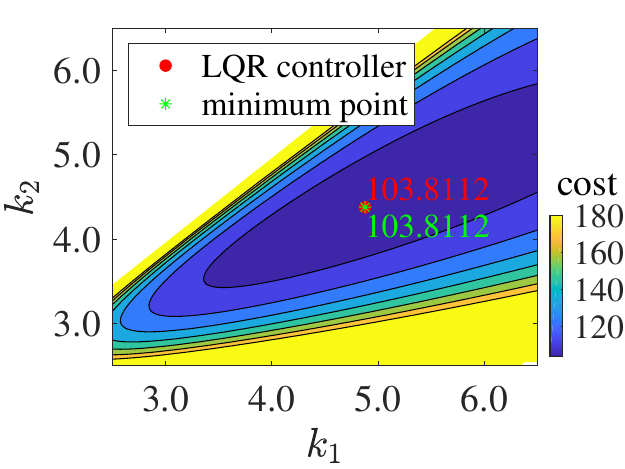}
        \end{minipage}
    }
    \subfigure[Norm of grad w.r.t $K$ under $L^{\star}$]{
        \label{subfig:grad_2_ctr}
        \begin{minipage}[t]{0.45\linewidth}
        \includegraphics[width=0.99\textwidth,trim=0 0 0 12,clip]{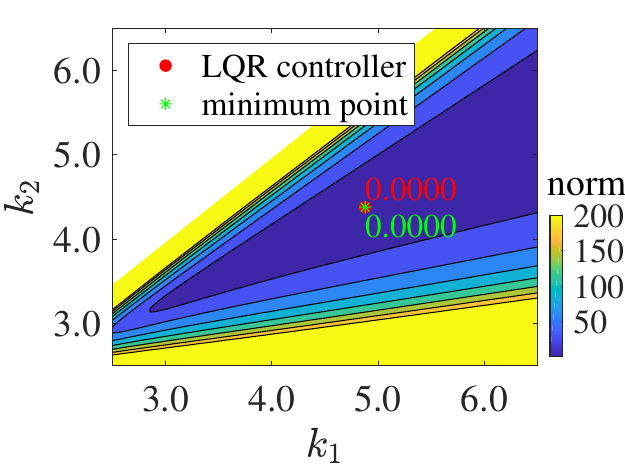}
        \end{minipage}
    }
    \vspace{-2mm}
    \caption{Results of special initial state correlation}
    \label{fig:perf_2}
    \vspace{-1mm}
\end{figure}

\subsection{Unstable System with 2D Controller and 2D Observer}
Consider the system with complex controller and observer: 
\begin{equation}
\nonumber
    A = \begin{bmatrix}
    1.1 & 0.1 \\
    0   & 1.1
    \end{bmatrix}, 
    B = \begin{bmatrix}
    0   & 0.1 \\
    0.1 & 0
    \end{bmatrix}, 
    C = \begin{bmatrix}
    1.0 & 1.0 \\
    0   & 1.0
    \end{bmatrix}.
\end{equation}
For $Y_g$ and $Y_s$, we separately solved the standard controller-observer pair $(K^{\star}, L^{\star})$ and identified the stationary point $(K^{\ddagger}, L^{\ddagger})$ of \texttt{OD-LQR}. 
The results in Table~\ref{tab:comp} indicate that, for the general initial state correlation $Y_g$, the stationary point $(K^{\ddagger}, L^{\ddagger})$ is different from $(K^{\star}, L^{\star})$, and achieves a slightly lower cost value compared with the standard LQR; while for the special initial state correlation $Y_s$ with $Y_{22} = Y_{12}^{\tr}$, the cost values are consistent. The results confirm that the theory remain valid for the higher-dimensional system. 

\begin{table}[!ht]
\centering
\caption{Comparison of cost values}
\label{tab:comp}
\begin{tabular}{ccc}
    \toprule
    $Y$ & $J(K^{\star}, L^{\star})$ & $J(K^{\ddagger}, L^{\ddagger})$ \\
    \midrule
    $Y_g$ & 25.4400 & 25.1660 \\
    $Y_s$ & 25.4400 & 25.4400 \\
    \bottomrule
\end{tabular}
\end{table}

\section{Conclusion}
\label{sec.conclusion}
In this work, we have explored the optimality of observer-controller pair on the cost function for \texttt{OD-LQR} problems. Based on the derived gradient expressions, we have demonstrated that the standard observer gain minimizing the accumulated estimation variance ensures optimality under the standard LQR controller. However, the standard LQR controller usually fails to achieve the optimal performance under the standard observer, unless the initial state correlations have a special structure. Moreover, we have characterized the stationary point of \texttt{OD-LQR} by Sylvester equations and proved that it reduces to the standard pair under the special initial state correlation, thereby recovering the separation principle. Our examples provide empirical support for the proposed theoretical results. Beyond providing practical guidance for separation-based controller designs, the derived Sylvester equations open a distinct pathway for data-driven methods in dynamic control, shifting the paradigm away from traditional Bellman equations.

\ifCLASSOPTIONcaptionsoff
  \newpage
\fi
\bibliographystyle{ieeetr}
\bibliography{reference}

\newpage
\onecolumn

\appendix
\section*{Proof of \Cref{lemma.gradient_dominance}}
\label{sec.gradient_dominance}
\begin{proof}
We define the cost difference between arbitrary gains $(K,L)$ and the optimal gains $(K^{\ddagger}, L^{\ddagger})$ as 
$$
J(K,L) - J(K^{\ddagger}, L^{\ddagger}) = \text{Tr}\left( (S_{K,L} - S^{\ddagger}) Y \right) = \text{Tr}\left( \Delta S \cdot Y \right). 
$$

First, we establish the relationship between $\Delta S$ and the parameter errors $\Delta K = K - K^{\ddagger}$ and $\Delta L = L - L^{\ddagger}$, and derive the cost difference through the Lyapunov equation. 

The matrices $S_{K,L}$ and $S^{\ddagger}$ satisfy the following Lyapunov equations: 
$$
\begin{aligned}
S_{K,L} &= \hat{Q}_K + \hat{\mathcal{A}}_{K,L}^\top S_{K,L} \hat{\mathcal{A}}_{K,L}, \\ 
S^{\ddagger} &= \hat{Q}_{K^{\ddagger}} + (\hat{\mathcal{A}}^{\ddagger})^\top S^{\ddagger} \hat{\mathcal{A}}^{\ddagger}, 
\end{aligned}
$$
where $\hat{Q}_K := \begin{bmatrix}Q + K^\top R K & - K^\top R K\\- K^\top R K & K^\top R K\end{bmatrix}$, and $\hat{\mathcal{A}}^{\ddagger} := \hat{\mathcal{A}}_{K^{\ddagger}, L^{\ddagger}} = \bar{A} - \hat{B} K^{\ddagger} \bar{F} + \hat{F}^\top L^{\ddagger} \hat{C}$. Subtracting these two equations yields 
$$
\Delta S = S_{K,L} - S^{\ddagger} = \hat{Q}_K - \hat{Q}_{K^{\ddagger}} + \hat{\mathcal{A}}_{K,L}^\top S_{K,L} \hat{\mathcal{A}}_{K,L} - (\hat{\mathcal{A}}^{\ddagger})^\top S^{\ddagger} \hat{\mathcal{A}}^{\ddagger}.
$$
Let $\Delta \hat{\mathcal{A}} = \hat{\mathcal{A}}_{K,L} - \hat{\mathcal{A}}^{\ddagger}$. Substituting $\hat{\mathcal{A}}_{K,L} = \hat{\mathcal{A}}^{\ddagger} + \Delta \hat{\mathcal{A}}$ and $S_{K,L} = S^{\ddagger} + \Delta S$ into the above equation yields 
$$
\begin{aligned}
\Delta S =& \hat{Q}_K - \hat{Q}_{K^{\ddagger}} + (\hat{\mathcal{A}}^{\ddagger} + \Delta \hat{\mathcal{A}})^\top (S^{\ddagger} + \Delta S) (\hat{\mathcal{A}}^{\ddagger} + \Delta \hat{\mathcal{A}}) - (\hat{\mathcal{A}}^{\ddagger})^\top S^{\ddagger} \hat{\mathcal{A}}^{\ddagger} \\
=& \hat{Q}_K - \hat{Q}_{K^{\ddagger}} + (\hat{\mathcal{A}}^{\ddagger})^\top S^{\ddagger} \Delta \hat{\mathcal{A}} + \Delta \hat{\mathcal{A}}^\top S^{\ddagger} \hat{\mathcal{A}}^{\ddagger} + \Delta \hat{\mathcal{A}}^\top S^{\ddagger} \Delta \hat{\mathcal{A}} + \hat{\mathcal{A}}_{K,L}^\top \Delta S \hat{\mathcal{A}}_{K,L}.
\end{aligned}
$$
By rearranging terms, we obtain a Lyapunov equation for $\Delta S$ 
$$
\Delta S = \mathcal{Z} + \hat{\mathcal{A}}_{K,L}^\top \Delta S \hat{\mathcal{A}}_{K,L},
$$
where the remainder term $\mathcal{Z}$ is defined as 
$$
\mathcal{Z} := \hat{Q}_K - \hat{Q}_{K^{\ddagger}} + (\hat{\mathcal{A}}^{\ddagger})^\top S^{\ddagger} \Delta \hat{\mathcal{A}} + \Delta \hat{\mathcal{A}}^\top S^{\ddagger} \hat{\mathcal{A}}^{\ddagger} + \Delta \hat{\mathcal{A}}^\top S^{\ddagger} \Delta \hat{\mathcal{A}}.
$$
Since $\hat{\mathcal{A}}_{K,L}$ is stable, the solution for $\Delta S$ is 
$$
\Delta S = \sum_{t=0}^\infty (\hat{\mathcal{A}}_{K,L}^\top)^t \mathcal{Z} (\hat{\mathcal{A}}_{K,L})^t. 
$$
Consequently, the cost difference is 
$$
J(K,L) - J(K^{\ddagger}, L^{\ddagger}) = \text{Tr}(\Delta S Y) = \text{Tr}\left( \mathcal{Z} \sum_{t=0}^\infty (\hat{\mathcal{A}}_{K,L}^\top)^t Y (\hat{\mathcal{A}}_{K,L})^t \right) = \text{Tr}(\mathcal{Z} \Omega_{K,L}). 
$$

Next, we expand the remainder term $\mathcal{Z}$ in the Lyapunov equation for $\Delta S$. 

Let $\Delta K = K - K^{\ddagger}$ and $\Delta L = L - L^{\ddagger}$. The dynamics variation is 
$$
\Delta \hat{\mathcal{A}} = \hat{\mathcal{A}}_{K,L} - \hat{\mathcal{A}}^{\ddagger} = -\hat{B} \Delta K \bar{F} + \hat{F}^\top \Delta L \hat{C}. 
$$
The cost matrix variation is 
$$
\hat{Q}_K - \hat{Q}_{K^{\ddagger}} = \bar{F}^\top (K^\top R K - (K^{\ddagger})^\top R K^{\ddagger}) \bar{F}. 
$$
Using the identity $K^\top R K - (K^{\ddagger})^\top R K^{\ddagger} = \Delta K^\top R K^{\ddagger} + (K^{\ddagger})^\top R \Delta K + \Delta K^\top R \Delta K$, we write 
$$
\hat{Q}_K - \hat{Q}_{K^{\ddagger}} = \bar{F}^\top \Delta K^\top R K^{\ddagger} \bar{F} + \bar{F}^\top (K^{\ddagger})^\top R \Delta K \bar{F} + \bar{F}^\top \Delta K^\top R \Delta K \bar{F}. 
$$
Substituting these into the expression for the remainder term $\mathcal{Z}$ in the Lyapunov equation for $\Delta S$ yields 
$$
\begin{aligned}
\mathcal{Z} &= \bar{F}^\top \Delta K^\top R K^{\ddagger} \bar{F} + \bar{F}^\top (K^{\ddagger})^\top R \Delta K \bar{F} + \bar{F}^\top \Delta K^\top R \Delta K \bar{F} \\
&\quad - (\hat{\mathcal{A}}^{\ddagger})^\top S^{\ddagger} \hat{B} \Delta K \bar{F} + (\hat{\mathcal{A}}^{\ddagger})^\top S^{\ddagger} \hat{F}^\top \Delta L \hat{C} - \bar{F}^\top \Delta K^\top \hat{B}^\top S^{\ddagger} \hat{\mathcal{A}}^{\ddagger} + \hat{C}^\top \Delta L^\top \hat{F} S^{\ddagger} \hat{\mathcal{A}}^{\ddagger} \\
&\quad + (-\bar{F}^\top \Delta K^\top \hat{B}^\top + \hat{C}^\top \Delta L^\top \hat{F}) S^{\ddagger} (-\hat{B} \Delta K \bar{F} + \hat{F}^\top \Delta L \hat{C}) \\
&= \bar{F}^\top \Delta K^\top R K^{\ddagger} \bar{F} + \bar{F}^\top (K^{\ddagger})^\top R \Delta K \bar{F} - (\hat{\mathcal{A}}^{\ddagger})^\top S^{\ddagger} \hat{B} \Delta K \bar{F} - \bar{F}^\top \Delta K^\top \hat{B}^\top S^{\ddagger} \hat{\mathcal{A}}^{\ddagger} \\
&\quad + (\hat{\mathcal{A}}^{\ddagger})^\top S^{\ddagger} \hat{F}^\top \Delta L \hat{C} + \hat{C}^\top \Delta L^\top \hat{F} S^{\ddagger} \hat{\mathcal{A}}^{\ddagger} \\
&\quad + \bar{F}^\top \Delta K^\top R \Delta K \bar{F} + (-\bar{F}^\top \Delta K^\top \hat{B}^\top + \hat{C}^\top \Delta L^\top \hat{F}) S^{\ddagger} (-\hat{B} \Delta K \bar{F} + \hat{F}^\top \Delta L \hat{C}). 
\end{aligned}
$$

We group the linear terms about $\Delta K$ in $\mathcal{L}_K^{\ddagger}$ as follows: 
$$
\begin{aligned}
\mathcal{L}_K^{\ddagger} &= \bar{F}^\top \Delta K^\top R K^{\ddagger} \bar{F} + \bar{F}^\top (K^{\ddagger})^\top R \Delta K \bar{F} - (\hat{\mathcal{A}}^{\ddagger})^\top S^{\ddagger} \hat{B} \Delta K \bar{F} - \bar{F}^\top \Delta K^\top \hat{B}^\top S^{\ddagger} \hat{\mathcal{A}}^{\ddagger} \\
&= \bar{F}^\top \Delta K^\top \left( R K^{\ddagger} \bar{F} - \hat{B}^\top S^{\ddagger} \hat{\mathcal{A}}^{\ddagger} \right) + \left( \bar{F}^\top (K^{\ddagger})^\top R - (\hat{\mathcal{A}}^{\ddagger})^\top S^{\ddagger} \hat{B} \right) \Delta K \bar{F} \\
&= \bar{F}^\top \Delta K^\top E_K^{\ddagger} + (E_K^{\ddagger})^\top \Delta K \bar{F},
\end{aligned}
$$
where $E_K := R K \bar{F} - \hat{B}^\top S \hat{\mathcal{A}}_{K,L}$ and $E_K^{\ddagger} := R K^{\ddagger} \bar{F} - \hat{B}^\top S^{\ddagger} \hat{\mathcal{A}}^{\ddagger}$. Define the error of $E_K$ as 
$$
\begin{aligned}
\Delta E_K &= E_K - E_K^{\ddagger} \\
&= R \Delta K \bar{F} - \hat{B}^\top (S \hat{\mathcal{A}}_{K,L} - S^{\ddagger} A^{\ddagger}) \\
&= R \Delta K \bar{F} - \hat{B}^\top (S \hat{\mathcal{A}}_{K,L} - S^{\ddagger} A^{\ddagger} + S^{\ddagger} \hat{\mathcal{A}}_{K,L} - S^{\ddagger} \hat{\mathcal{A}}_{K,L}) \\
&= R \Delta K \bar{F} - \hat{B}^\top S^{\ddagger} \Delta \hat{A} - \hat{B}^\top \Delta S \hat{\mathcal{A}}_{K,L} \\
&= R \Delta K \bar{F} + \hat{B}^\top S^{\ddagger} \hat{B} \Delta K \bar{F} - \hat{B}^\top S^{\ddagger} \hat{F}^\top \Delta L \hat{C} - \hat{B}^\top \Delta S \hat{\mathcal{A}}_{K,L} \\
&= (R + \hat{B}^\top S^{\ddagger} \hat{B}) \Delta K \bar{F} - \hat{B}^\top S^{\ddagger} \hat{F}^\top \Delta L \hat{C} - \hat{B}^\top \Delta S \hat{\mathcal{A}}_{K,L} . 
\end{aligned}
$$

Similarly, the linear terms about $\Delta L$ is 
$$
\mathcal{L}_L^{\ddagger} = (\hat{\mathcal{A}}^{\ddagger})^\top S^{\ddagger} \hat{F}^\top \Delta L \hat{C} + \hat{C}^\top \Delta L^\top \hat{F} S^{\ddagger} \hat{\mathcal{A}}^{\ddagger} = (E_L^{\ddagger})^\top \Delta L \hat{C} + \hat{C}^\top \Delta L^\top E_L^{\ddagger}, 
$$
where $E_L = \hat{F} S \hat{\mathcal{A}}$ and $E_L^{\ddagger} = \hat{F} S^{\ddagger} \hat{\mathcal{A}}^{\ddagger}$. Define the error of $E_L$ as 
$$
\begin{aligned}
\Delta E_L &= E_L - E_L^{\ddagger} \\
&= \hat{F} (S \hat{\mathcal{A}} - S^{\ddagger} \hat{\mathcal{A}}^{\ddagger}) \\
&= \hat{F} S^{\ddagger} \Delta \hat{A} + \hat{F} \Delta S \hat{\mathcal{A}}_{K,L} \\
&= \hat{F} S^{\ddagger} \hat{F}^\top \Delta L \hat{C} - \hat{F} S^{\ddagger}\hat{B} \Delta K \bar{F} + \hat{F} \Delta S \hat{\mathcal{A}}_{K,L} . 
\end{aligned}
$$

Besides, the quadratic terms are 
$$
\begin{aligned}
\mathcal{Q}_{K,L}^{\ddagger} &= \bar{F}^\top \Delta K^\top R \Delta K \bar{F} + (-\bar{F}^\top \Delta K^\top \hat{B}^\top + \hat{C}^\top \Delta L^\top \hat{F}) S^{\ddagger} (-\hat{B} \Delta K \bar{F} + \hat{F}^\top \Delta L \hat{C}) \\
&= \bar{F}^\top \Delta K^\top (R + \hat{B}^\top S^{\ddagger} \hat{B}) \Delta K \bar{F} + \hat{C}^\top \Delta L^\top \hat{F} S^{\ddagger} \hat{F}^\top \Delta L \hat{C} \\
&\quad - \bar{F}^\top \Delta K^\top \hat{B}^\top S^{\ddagger} \hat{F}^\top \Delta L \hat{C} - \hat{C}^\top \Delta L^\top \hat{F} S^{\ddagger} \hat{B} \Delta K \bar{F} . 
\end{aligned}
$$
Since $R \succ 0$, $S^{\ddagger} \succ 0$ and $\hat{F}$ has rows of full rank, we have $R + \hat{B}^\top S^{\ddagger} \hat{B} \succ 0$ and $\hat{F} S^{\ddagger} \hat{F}^{\top} \succ 0$. Therefore, we can calculate the Frobenius norm of matrix $\mathcal{Z}$ as follows: 
$$
\begin{aligned}
\|\mathcal{Z}\|_F &\le \|\mathcal{L}_K^{\ddagger}\|_F + \|\mathcal{L}_L^{\ddagger}\|_F + \|\mathcal{Q}_{K,L}^{\ddagger}\|_F \\
&\le 2 \|\bar{F}\|_2 \|E_K^{\ddagger}\|_2 \|\Delta K\|_F + 2 \|\hat{C}\|_2 \|E_L^{\ddagger}\|_2 \|\Delta L\|_F \\
&\quad + \|\bar{F}\|_2^2 \|R + \hat{B}^\top S^{\ddagger} \hat{B}\|_2 \|\Delta K\|_F^2 + \|\hat{C}\|_2^2 \|\hat{F} S^{\ddagger} \hat{F}^\top\|_2 \|\Delta L\|_F^2 \\
&\quad + 2 \|\bar{F}\|_2 \|\hat{C}\|_2 \|\hat{B}^\top S^{\ddagger} \hat{F}^\top\|_2 \|\Delta K\|_F \|\Delta L\|_F \\
&\le 2 \|\bar{F}\|_2 \|E_K^{\ddagger} \|_2 \|\Delta K\|_F + 2 \|\hat{C}\|_2 \|E_L^{\ddagger} \|_2 \|\Delta L\|_F \\
&\quad + \|\bar{F}\|_2^2 \|R + \hat{B}^\top S^{\ddagger} \hat{B}\|_2 \|\Delta K\|_F^2 + \|\hat{C}\|_2^2 \|\hat{F} S^{\ddagger} \hat{F}^\top\|_2 \|\Delta L\|_F^2 \\
&\quad + 2 \|\bar{F}\|_2 \|\hat{C}\|_2 \|\hat{B}^\top S^{\ddagger} \hat{F}^\top\|_2 \left( \epsilon_1 \|\Delta K\|_F^2 + \frac{1}{\epsilon_1} \|\Delta L\|_F^2 \right) \\
&= C_{K1} \|\Delta K\|_F + C_{L1} \|\Delta L\|_F + C_{K2} \|\Delta K\|_F^2 + C_{L2} \|\Delta L\|_F^2 , 
\end{aligned}
$$
where $\epsilon_1 > 0$, $C_{K1} := 2 \|\bar{F}\|_2 \|E_K^{\ddagger}$, $C_{L1} := 2 \|\hat{C}\|_2 \|E_L^{\ddagger}$, $C_{K2} := \|\bar{F}\|_2^2 \|R + \hat{B}^\top S^{\ddagger} \hat{B}\|_2 + 2 \epsilon_1 \|\bar{F}\|_2 \|\hat{C}\|_2 \|\hat{B}^\top S^{\ddagger} \hat{F}^\top\|_2$, and $C_{L2} := 2 \|\hat{C}\|_2 \|E_L^{\ddagger} \|_2 + \frac{2}{\epsilon_1} \|\bar{F}\|_2 \|\hat{C}\|_2 \|\hat{B}^\top S^{\ddagger} \hat{F}^\top\|_2$. 

Then, we expand the terms in the cost difference. 

The cost difference can be further expressed as: 
$$
J(K,L) - J(K^{\ddagger}, L^{\ddagger}) = \text{Tr}(\mathcal{Z} \Omega_{K,L}) = \text{Tr}((\mathcal{L}_K^{\ddagger} + \mathcal{L}_L^{\ddagger} + \mathcal{Q}_{K,L}^{\ddagger}) \Omega_{K,L}).
$$
The stationarity conditions are $\nabla_K J(K^{\ddagger}, L^{\ddagger}) = 0$ and $\nabla_L J(K^{\ddagger}, L^{\ddagger}) = 0$, i.e., 
$$
\nabla_K J(K^{\ddagger}, L^{\ddagger}) = 2 \left( R K^{\ddagger} \bar{F} - \hat{B}^\top S^{\ddagger} \hat{\mathcal{A}}^{\ddagger} \right) \Omega^{\ddagger} \bar{F}^\top = 2 E_K^{\ddagger} \Omega^{\ddagger} \bar{F}^\top = 0, 
$$
$$
\nabla_L J(K^{\ddagger}, L^{\ddagger}) = 2 \hat{F} S^{\ddagger} \hat{\mathcal{A}}^{\ddagger} \Omega^{\ddagger} \hat{C}^\top = 2 E_L^{\ddagger} \Omega^{\ddagger} \hat{C}^\top = 0. 
$$
Since $\Omega^{\ddagger}\bar{F}^\top$ and $\Omega^{\ddagger}\hat{C}^\top$ may not have sufficient rank, we cannot conclude $E_K^{\ddagger} = 0$ or $E_L^{\ddagger} = 0$. 
However, we can evaluate the trace of the linear and quadratic terms against $\Omega_{K,L}$.

Consider the trace of the $\mathcal{L}_K^{\ddagger}$ term: 
$$
\text{Tr}(\mathcal{L}_K^{\ddagger} \Omega_{K,L}) = 2 \text{Tr}(\bar{F}^\top \Delta K^\top E_K^{\ddagger} \Omega_{K,L}) = 2 \text{Tr}(\Delta K^\top E_K^{\ddagger} \Omega_{K,L} \bar{F}^\top). 
$$
It can be derived that 
$$
\begin{aligned}
2 \text{Tr}(\Delta K^\top E_K^{\ddagger} \Omega_{K,L} \bar{F}^\top) &= 2 \text{Tr}(\Delta K^\top E_K \Omega_{K,L} \bar{F}^\top) - 2 \text{Tr}(\Delta K^\top \Delta E_K \Omega_{K,L} \bar{F}^\top) \\
&= \text{Tr}(\Delta K^\top \nabla_K J(K, L)) - 2 \text{Tr}(\Delta K^\top (R + \hat{B}^\top S^{\ddagger} \hat{B}) \Delta K \bar{F} \Omega_{K,L} \bar{F}^\top) \\
&\quad + 2 \text{Tr}(\Delta K^\top \hat{B}^\top S^{\ddagger} \hat{F}^\top \Delta L \hat{C} \Omega_{K,L} \bar{F}^\top) + 2 \text{Tr}(\Delta K^\top \hat{B}^\top \Delta S \hat{\mathcal{A}}_{K,L} \Omega_{K,L} \bar{F}^\top) . 
\end{aligned}
$$

Consider the trace of the $\mathcal{L}_L^{\ddagger}$ term: 
$$
\text{Tr}(\mathcal{L}_L^{\ddagger} \Omega_{K,L}) = 2 \text{Tr}(\hat{C}^\top \Delta L^\top E_L^{\ddagger} \Omega_{K,L}) = 2 \text{Tr}(\Delta L^\top E_L^{\ddagger} \Omega_{K,L} \hat{C}^\top). 
$$
It can be derived that 
$$
\begin{aligned}
2 \text{Tr}(\Delta L^\top E_L^{\ddagger} \Omega_{K,L} \hat{C}^\top) &= 2 \text{Tr}(\Delta L^\top E_L \Omega_{K,L} \hat{C}^\top)  - 2 \text{Tr}(\Delta L^\top \Delta E_L \Omega_{K,L} \hat{C}^\top) \\
&= \text{Tr}(\Delta L^\top \nabla_L J(K, L))  - 2 \text{Tr}(\Delta L^\top\hat{F} S^{\ddagger} \hat{F}^\top \Delta L \hat{C} \Omega_{K,L} \hat{C}^\top) \\
&\quad + 2 \text{Tr}(\Delta L^\top \hat{F} S^{\ddagger}\hat{B} \Delta K \bar{F} \Omega_{K,L} \hat{C}^\top) - 2 \text{Tr}(\Delta L^\top \hat{F} \Delta S \hat{\mathcal{A}}_{K,L} \Omega_{K,L} \hat{C}^\top) . 
\end{aligned}
$$

Consider the quadratic terms: 
$$
\begin{aligned}
\text{Tr}(\mathcal{Q}_{K,L}^{\ddagger} \Omega_{K,L}) &= \text{Tr}(\bar{F}^\top \Delta K^\top (R + \hat{B}^\top S^{\ddagger} \hat{B}) \Delta K \bar{F} \Omega_{K,L}) + \text{Tr}(\hat{C}^\top \Delta L^\top \hat{F} S^{\ddagger} \hat{F}^\top \Delta L \hat{C} \Omega_{K,L}) \\
&\quad - \text{Tr}(\bar{F}^\top \Delta K^\top \hat{B}^\top S^{\ddagger} \hat{F}^\top \Delta L \hat{C} \Omega_{K,L}) - \text{Tr}(\hat{C}^\top \Delta L^\top \hat{F} S^{\ddagger} \hat{B} \Delta K \bar{F} \Omega_{K,L}) \\
&= \text{Tr}(\Delta K^\top (R + \hat{B}^\top S^{\ddagger} \hat{B}) \Delta K \bar{F} \Omega_{K,L} \bar{F}^\top) + \text{Tr}(\Delta L^\top \hat{F} S^{\ddagger} \hat{F}^\top \Delta L \hat{C} \Omega_{K,L} \hat{C}^\top) \\
&\quad - 2 \text{Tr}(\Delta K^\top \hat{B}^\top S^{\ddagger} \hat{F}^\top \Delta L \hat{C} \Omega_{K,L} \bar{F}^\top). 
\end{aligned}
$$
Therefore, we have 
$$
\begin{aligned}
J(K,L) - J(K^{\ddagger}, L^{\ddagger}) &= \text{Tr}(\Delta K^\top (R + \hat{B}^\top S^{\ddagger} \hat{B}) \Delta K \bar{F} \Omega_{K,L} \bar{F}^\top) + \text{Tr}(\Delta L^\top \hat{F} S^{\ddagger} \hat{F}^\top \Delta L \hat{C} \Omega_{K,L} \hat{C}^\top) \\
&\quad - 2 \text{Tr}(\Delta K^\top \hat{B}^\top S^{\ddagger} \hat{F}^\top \Delta L \hat{C} \Omega_{K,L} \bar{F}^\top) \\
&\quad + \text{Tr}(\Delta K^\top \nabla_K J(K, L)) - 2 \text{Tr}(\Delta K^\top (R + \hat{B}^\top S^{\ddagger} \hat{B}) \Delta K \bar{F} \Omega_{K,L} \bar{F}^\top) \\
&\quad + 2 \text{Tr}(\Delta K^\top \hat{B}^\top S^{\ddagger} \hat{F}^\top \Delta L \hat{C} \Omega_{K,L} \bar{F}^\top) + 2 \text{Tr}(\Delta K^\top \hat{B}^\top \Delta S \hat{\mathcal{A}}_{K,L} \Omega_{K,L} \bar{F}^\top) \\
&\quad + \text{Tr}(\Delta L^\top \nabla_L J(K, L))  - 2 \text{Tr}(\Delta L^\top\hat{F} S^{\ddagger} \hat{F}^\top \Delta L \hat{C} \Omega_{K,L} \hat{C}^\top) \\
&\quad + 2 \text{Tr}(\Delta L^\top \hat{F} S^{\ddagger}\hat{B} \Delta K \bar{F} \Omega_{K,L} \hat{C}^\top) - 2 \text{Tr}(\Delta L^\top \hat{F} \Delta S \hat{\mathcal{A}}_{K,L} \Omega_{K,L} \hat{C}^\top) \\
&= \text{Tr}(\Delta K^\top \nabla_K J(K, L)) + \text{Tr}(\Delta L^\top \nabla_L J(K, L)) \\
&\quad - \text{Tr}(\Delta K^\top (R + \hat{B}^\top S^{\ddagger} \hat{B}) \Delta K \bar{F} \Omega_{K,L} \bar{F}^\top) - \text{Tr}(\Delta L^\top \hat{F} S^{\ddagger} \hat{F}^\top \Delta L \hat{C} \Omega_{K,L} \hat{C}^\top) \\
&\quad + 2 \text{Tr}(\Delta K^\top \hat{B}^\top S^{\ddagger} \hat{F}^\top \Delta L \hat{C} \Omega_{K,L} \bar{F}^\top) \\
&\quad + 2 \text{Tr}(\Delta K^\top \hat{B}^\top \Delta S \hat{\mathcal{A}}_{K,L} \Omega_{K,L} \bar{F}^\top) - 2 \text{Tr}(\Delta L^\top \hat{F} \Delta S \hat{\mathcal{A}}_{K,L} \Omega_{K,L} \hat{C}^\top) \\
&= \text{Tr}(\Delta K^\top \nabla_K J(K, L)) + \text{Tr}(\Delta L^\top \nabla_L J(K, L)) \\
&\quad - \text{Tr}(\mathcal{Q}_{K,L}^{\ddagger} \Omega_{K,L}) - 2 \text{Tr}(\Delta S \hat{\mathcal{A}}_{K,L} \Omega_{K,L} \Delta \hat{\mathcal{A}}^{\top}) . 
\end{aligned}
$$

Note that $\Omega_{K,L} \succ 0$, $\bar{F}$ and $\hat{C}$ have rows of full rank, one has $\bar{F} \Omega_{K,L} \bar{F}^{\top} \succ 0$ and $\hat{C} \Omega_{K,L} \hat{C}^{\top} \succ 0$. By utilizing the expression of $\mathcal{Q}_{K,L}^{\ddagger}$, the second to last matrix trace term in the cost difference is 
$$
\begin{aligned}
\text{Tr}(\mathcal{Q}_{K,L}^{\ddagger} \Omega_{K,L}) &\ge \text{Tr}(\Delta K^\top (R + \hat{B}^\top S^{\ddagger} \hat{B}) \Delta K \bar{F} \Omega_{K,L} \bar{F}^\top) + \text{Tr}(\Delta L^\top \hat{F} S^{\ddagger} \hat{F}^\top \Delta L \hat{C} \Omega_{K,L} \hat{C}^\top) \\
&\quad - 2 \|\hat{B}^\top S^{\ddagger} \hat{F}^\top\|_2 \|\hat{C} \Omega_{K,L} \bar{F}^\top\|_2 \|\Delta K\|_F \|\Delta L\|_F \\
&\ge \lambda_{\min}(R + \hat{B}^\top S^{\ddagger} \hat{B}) \lambda_{\min}(\bar{F} \Omega_{K,L} \bar{F}^\top) \|\Delta K\|_F^2 + \lambda_{\min}(\hat{F} S^{\ddagger} \hat{F}^\top) \lambda_{\min}(\hat{C} \Omega_{K,L} \hat{C}^\top) \|\Delta L\|_F^2 \\
&\quad - \|\hat{B}^\top S^{\ddagger} \hat{F}^\top\|_2 \|\hat{C} \Omega_{K,L} \bar{F}^\top\|_2 \left( \epsilon_2 \|\Delta K\|_F^2 + \frac{1}{\epsilon_2} \|\Delta L\|_F^2 \right) \\
&= \alpha_{K2} \|\Delta K\|_F^2 + \alpha_{L2} \|\Delta L\|_F^2 , 
\end{aligned}
$$
where $\epsilon_2 > 0$, $\alpha_{K2} := \lambda_{\min}(R + \hat{B}^\top S^{\ddagger} \hat{B}) \lambda_{\min}(\bar{F} \Omega_{K,L} \bar{F}^\top) - \epsilon_2 \|\hat{B}^\top S^{\ddagger} \hat{F}^\top\|_2 \|\hat{C} \Omega_{K,L} \bar{F}^\top\|_2$ and $\alpha_{L2} := \lambda_{\min}(\hat{F} S^{\ddagger} \hat{F}^\top) \lambda_{\min}(\hat{C} \Omega_{K,L} \hat{C}^\top) - \frac{1}{\epsilon_2} \|\hat{B}^\top S^{\ddagger} \hat{F}^\top\|_2 \|\hat{C} \Omega_{K,L} \bar{F}^\top\|_2$. 

Assuming that the spectral norm of the closed-loop matrix $\hat{\mathcal{A}}_{K,L}$ is uniformly bounded, i.e., $\|\hat{\mathcal{A}}_{K,L}\|_2 \le \gamma < 1$. According to the expression of $\Delta S$ and the cumulative state correlation $\Omega_{K,L}$, we have 
$$
\|\Omega_{K,L}\|_2 \le \|Y\|_2 \sum_{t=0}^{\infty} \gamma^{2t} = \frac{\|Y\|_2}{1 - \gamma^2}, 
$$
$$
\|\Delta S\|_F \le \|\mathcal{Z}\|_F \sum_{t=0}^{\infty} \gamma^{2t} = \frac{\|\mathcal{Z}\|_F}{1 - \gamma^2}. 
$$

According to Cauchy-Schwarz inequality, the last matrix trace term in the cost difference is 
$$
\begin{aligned}
&\quad - 2 \text{Tr}(\Delta S \hat{\mathcal{A}}_{K,L} \Omega_{K,L} \Delta \hat{\mathcal{A}}^{\top}) \\
&\le 2 \|\Delta S \hat{\mathcal{A}}_{K,L}\|_F \times \|\Delta \hat{\mathcal{A}} \Omega_{K,L}\|_F \\
&\le 2 \|\Delta S\|_F \|\hat{\mathcal{A}}_{K,L}\|_2 \|\Delta \hat{\mathcal{A}}\|_F \|\Omega_{K,L}\|_2 \\
&\le \frac{2 \gamma \|Y\|_2}{\left(1 - \gamma^2\right)^2} \left(C_{K1} \|\Delta K\|_F + C_{L1} \|\Delta L\|_F + C_{K2} \|\Delta K\|_F^2 + C_{L2} \|\Delta L\|_F^2\right) \\
&\quad \times \left(\|\hat{B}\|_2 \|\bar{F}\|_2 \|\Delta K\|_F + \|\hat{F}\|_2 \|\hat{C}\|_2 \|\Delta L\|_F\right) \\
&= \frac{2 \gamma C_{K1} \|Y\|_2 \|\hat{B}\|_2 \|\bar{F}\|_2}{\left(1 - \gamma^2\right)^2} \|\Delta K\|_F^2 + \frac{2 \gamma C_{L1} \|Y\|_2 \|\hat{F}\|_2 \|\hat{C}\|_2}{\left(1 - \gamma^2\right)^2} \|\Delta L\|_F^2 \\
&\quad + \frac{2 \gamma C_{K1} \|Y\|_2 \|\hat{F}\|_2 \|\hat{C}\|_2}{\left(1 - \gamma^2\right)^2} \|\Delta K\|_F \|\Delta L\|_F + \frac{2 \gamma C_{L1} \|Y\|_2 \|\hat{B}\|_2 \|\bar{F}\|_2}{\left(1 - \gamma^2\right)^2} \|\Delta K\|_F \|\Delta L\|_F \\
&\quad + \frac{2 \gamma C_{K2} \|Y\|_2 \|\hat{B}\|_2 \|\bar{F}\|_2}{\left(1 - \gamma^2\right)^2} \|\Delta K\|_F^3 + \frac{2 \gamma C_{L2} \|Y\|_2 \|\hat{F}\|_2 \|\hat{C}\|_2}{\left(1 - \gamma^2\right)^2} \|\Delta L\|_F^3 \\
&\quad + \frac{2 \gamma C_{K2} \|Y\|_2 \|\hat{F}\|_2 \|\hat{C}\|_2}{\left(1 - \gamma^2\right)^2} \|\Delta K\|_F^2 \|\Delta L\|_F + \frac{2 \gamma C_{L2} \|Y\|_2 \|\hat{B}\|_2 \|\bar{F}\|_2}{\left(1 - \gamma^2\right)^2} \|\Delta L\|_F^2 \|\Delta K\|_F \\
&\le \frac{2 \gamma C_{K1} \|Y\|_2 \|\hat{B}\|_2 \|\bar{F}\|_2}{\left(1 - \gamma^2\right)^2} \|\Delta K\|_F^2 + \frac{2 \gamma C_{L1} \|Y\|_2 \|\hat{F}\|_2 \|\hat{C}\|_2}{\left(1 - \gamma^2\right)^2} \|\Delta L\|_F^2 \\
&\quad + \frac{2 \gamma C_{K2} \|Y\|_2 \|\hat{B}\|_2 \|\bar{F}\|_2}{\left(1 - \gamma^2\right)^2} \|\Delta K\|_F^3 + \frac{2 \gamma C_{L2} \|Y\|_2 \|\hat{F}\|_2 \|\hat{C}\|_2}{\left(1 - \gamma^2\right)^2} \|\Delta L\|_F^3 \\
&\quad + \frac{2 \gamma C_{K1} \|Y\|_2 \|\hat{F}\|_2 \|\hat{C}\|_2}{\left(1 - \gamma^2\right)^2} \left(\frac{\epsilon_3}{2}\|\Delta K\|_F^2 + \frac{1}{2\epsilon_3}\|\Delta L\|_F^2\right) \\
&\quad + \frac{2 \gamma C_{L1} \|Y\|_2 \|\hat{B}\|_2 \|\bar{F}\|_2}{\left(1 - \gamma^2\right)^2} \left(\frac{\epsilon_4}{2}\|\Delta K\|_F^2 + \frac{1}{2\epsilon_4}\|\Delta L\|_F^2\right) \\
&\quad + \frac{2 \gamma C_{K2} \|Y\|_2 \|\hat{F}\|_2 \|\hat{C}\|_2}{\left(1 - \gamma^2\right)^2} \left(\frac{2}{3}\epsilon_5\|\Delta K\|_F^3 + \frac{1}{3\epsilon_5^{1/2}}\|\Delta L\|_F^3\right) \\
&\quad + \frac{2 \gamma C_{L2} \|Y\|_2 \|\hat{B}\|_2 \|\bar{F}\|_2}{\left(1 - \gamma^2\right)^2} \left(\frac{2}{3}\epsilon_6\|\Delta K\|_F^3 + \frac{1}{3\epsilon_6^{1/2}}\|\Delta L\|_F^3\right) \\
&= \beta_{K2} \|\Delta K\|_F^2 + \beta_{K3} \|\Delta K\|_F^3 + \beta_{L2} \|\Delta L\|_F^2 + \beta_{L3} \|\Delta L\|_F^3 , 
\end{aligned}
$$
where $\beta_{K2}$, $\beta_{K3}$, $\beta_{L2}$ and $\beta_{L3}$ are the corresponding coefficients of the quadratic and cubic terms of the Frobenius norm of the error of controller gain and observer gain. Based on the above results, we have 
$$
\begin{aligned}
J(K,L) - J(K^{\ddagger}, L^{\ddagger}) &\le \text{Tr}(\Delta K^\top \nabla_K J(K, L)) - \alpha_{K2} \|\Delta K\|_F^2 + \beta_{K2} \|\Delta K\|_F^2 + \beta_{K3} \|\Delta K\|_F^3 \\
&\quad + \text{Tr}(\Delta L^\top \nabla_L J(K, L)) - \alpha_{L2} \|\Delta L\|_F^2 + \beta_{L2} \|\Delta L\|_F^2 + \beta_{L3} \|\Delta L\|_F^3 . 
\end{aligned}
$$

Finally, we utilized the gradient of the cost value with respect to controller gain and observer gain to derive an upper bound on the cost difference. 

Assuming that the system has sufficient stability margin, there exist a set of parameters $\{\epsilon_i\}_{i=1}^6$ such that $\alpha_{K2} - \beta_{K2} > 0$ and $\alpha_{L2} - \beta_{L2} > 0$. Define the local attraction radii of the controller gain $K$ and observer gain $L$ as $r_K := \frac{\alpha_{K2} - \beta_{K2}}{2 \beta_{K3}}$ and $r_L := \frac{\alpha_{L2} - \beta_{L2}}{2 \beta_{L3}}$. When the controller gain $K$ and observer gain $L$ satisfy $\|\Delta K\|_F \le r_K$ and $\|\Delta L\|_F \le r_L$, then
$$
\begin{aligned}
\beta_{K3} \|\Delta K\|_F^3 &\le \frac{\alpha_{K2} - \beta_{K2}}{2} \|\Delta K\|_F^2, \\ 
\beta_{L3} \|\Delta L\|_F^3 &\le \frac{\alpha_{L2} - \beta_{L2}}{2} \|\Delta L\|_F^2. 
\end{aligned}
$$
Therefore, we have the following gradient dominance property: 
$$
\begin{aligned}
J(K,L) - J(K^{\ddagger}, L^{\ddagger}) &\le \|\nabla_K J(K, L)\|_F \|\Delta K\|_F - \frac{\alpha_{K2} - \beta_{K2}}{2} \|\Delta K\|_F^2 \\
&\quad + \|\nabla_L J(K, L)\|_F \|\Delta L\|_F - \frac{\alpha_{L2} - \beta_{L2}}{2} \|\Delta L\|_F^2 \\
&\le \frac{\|\nabla_K J(K, L)\|_F^2}{2 (\alpha_{K2} - \beta_{K2})} + \frac{\alpha_{K2} - \beta_{K2}}{2} \|\Delta K\|_F^2 - \frac{\alpha_{K2} - \beta_{K2}}{2} \|\Delta K\|_F^2 \\
&\quad + \frac{\|\nabla_L J(K, L)\|_F^2}{2 (\alpha_{L2} - \beta_{L2})} + \frac{\alpha_{L2} - \beta_{L2}}{2} - \frac{\alpha_{L2} - \beta_{L2}}{2} \|\Delta L\|_F^2 \\
&= \frac{\|\nabla_K J(K, L)\|_F^2}{2 (\alpha_{K2} - \beta_{K2})} + \frac{\|\nabla_L J(K, L)\|_F^2}{2 (\alpha_{L2} - \beta_{L2})} . 
\end{aligned}
$$
\end{proof}

\end{document}